\numberwithin{equation}{section}
\begin{document}

\newtheorem{theorem}{Theorem}[section]
\newtheorem{corollary}[theorem]{Corollary}
\newtheorem{lemma}[theorem]{Lemma}
\newtheorem{proposition}[theorem]{Proposition}

\newcommand{\adiffop}{A$\Delta$O}
\newcommand{\adiffops}{A$\Delta$Os}

\newcommand{\be}{\begin{equation}}
\newcommand{\ee}{\end{equation}}
\newcommand{\bea}{\begin{eqnarray}}
\newcommand{\eea}{\end{eqnarray}}
\newcommand{\sh}{{\rm sh}}
\newcommand{\ch}{{\rm ch}}
\newcommand{\einde}{$\ \ \ \Box$ \vspace{5mm}}
\newcommand{\De}{\Delta}
\newcommand{\de}{\delta}
\newcommand{\Z}{{\mathbb Z}}
\newcommand{\N}{{\mathbb N}}
\newcommand{\C}{{\mathbb C}}
\newcommand{\Cs}{{\mathbb C}^{*}}
\newcommand{\R}{{\mathbb R}}
\newcommand{\Q}{{\mathbb Q}}
\newcommand{\T}{{\mathbb T}}
\newcommand{\D}{{\mathbb D}}
\newcommand{\re}{{\rm Re}\, }
\newcommand{\im}{{\rm Im}\, }
\newcommand{\cW}{{\cal W}}
\newcommand{\cJ}{{\cal J}}
\newcommand{\cE}{{\cal E}}
\newcommand{\cA}{{\cal A}}
\newcommand{\cR}{{\cal R}}
\newcommand{\cP}{{\cal P}}
\newcommand{\cM}{{\cal M}}
\newcommand{\cN}{{\cal N}}
\newcommand{\cI}{{\cal I}}
\newcommand{\cMs}{{\cal M}^{*}}
\newcommand{\cB}{{\cal B}}
\newcommand{\cD}{{\cal D}}
\newcommand{\cC}{{\cal C}}
\newcommand{\cL}{{\cal L}}
\newcommand{\cF}{{\cal F}}
\newcommand{\cH}{{\cal H}}
\newcommand{\cS}{{\cal S}}
\newcommand{\cT}{{\cal T}}
\newcommand{\cU}{{\cal U}}
\newcommand{\cQ}{{\cal Q}}
\newcommand{\cV}{{\cal V}}
\newcommand{\cK}{{\cal K}}
\newcommand{\intR}{\int_{-\infty}^{\infty}}
\newcommand{\intI}{\int_{0}^{\pi/2r}}
\newcommand{\limp}{\lim_{\re x \to \infty}}
\newcommand{\limn}{\lim_{\re x \to -\infty}}
\newcommand{\limpn}{\lim_{|\re x| \to \infty}}
\newcommand{\diag}{{\rm diag}}
\newcommand{\Ln}{{\rm Ln}}
\newcommand{\Arg}{{\rm Arg}}

\title{A recursive construction of joint eigenfunctions for the hyperbolic nonrelativistic Calogero-Moser Hamiltonians}

\author{Martin Halln\"as \\Department of Mathematical Sciences, \\ Loughborough University, Leicestershire LE11 3TU, UK \\ and \\Simon Ruijsenaars \\ School of Mathematics, \\ University of Leeds, Leeds LS2 9JT, UK}

\date{ }

\maketitle

\begin{abstract}
We obtain symmetric joint eigenfunctions for the commuting PDOs associated to the hyperbolic Calogero-Moser $N$-particle system. The eigenfunctions are constructed via a recursion scheme, which leads to representations by multidimensional integrals whose integrands are elementary functions. We also tie in these eigenfunctions with the Heckman-Opdam hypergeometric function for the root system~$A_{N-1}$.
\end{abstract}

\tableofcontents

\section{Introduction}

In this paper we present and develop a recursive scheme to construct joint eigenfunctions for the commuting partial differential operators (PDOs) associated with the integrable $N$-particle system of hyperbolic Calogero-Moser type. This scheme gives rise to explicit integral representations of the joint eigenfunctions involving elementary integrands, cf.~Eqs.~\eqref{intFN1}--\eqref{intFN2}. The present paper may be viewed as  a nonrelativistic counterpart of our recent paper \cite{HR13} on joint eigenfunctions for the hyperbolic relativistic Calogero-Moser system. As will be seen, the multidimensional explicit integrals obtained in Appendix~C of the latter yield once again the tools to get analytic control on the formalities of the recursive construction, and more generally the present paper is organized in a way similar to~\cite{HR13}.

As is well known, the nonrelativistic hyperbolic Calogero-Moser system can be defined by the Hamiltonian
\be\label{H2}
H_2 = -\hbar^2\sum_{1\leq j_1< j_2\leq N}\partial_{x_{j_1}}\partial_{x_{j_2}} - g(g-\hbar)\sum_{1\leq j<l\leq N}\mu^2/4\sinh^2(\mu(x_j-x_l)/2),
\ee
where $\hbar$ is Planck's constant, $g>0$ a coupling constant with dimension [action], and $\mu>0$ a parameter with dimension [position]$^{-1}$. There exist $N-1$ additional independent PDOs $H_k$, $k=1,3,\ldots,N$, that together with $H_2$ form a family of pairwise commuting PDOs. They can be chosen to be of the form
\be\label{Hk}
H_1=-i\hbar \sum_{j=1}^N \partial_{x_j}, \ \ \ \ H_k = (-i\hbar)^k\sum_{1\leq j_1<\cdots<j_k\leq N}\partial_{x_{j_1}}\cdots\partial_{x_{j_k}} + {\rm l.o.},\ \ \ k>2,
\ee
where ${\rm l.o.}$ stands for terms of lower order in the partial derivatives $\partial_{x_j}$.

For arbitrary values of the coupling constant $g$, joint eigenfunctions of these PDOs were first constructed and studied by Heckman and Opdam \cite{HO87}. More precisely, they considered the problem in the context of an arbitrary root system, whereas we restrict attention to $A_{N-1}$.  In Heckman and Opdam's work the eigenfunctions were obtained using the classical method of series substitution. For given joint eigenvalues their construction for~$A_{N-1}$ gives rise to~$N!$ joint eigenfunctions, whereas the present method only yields the (essentially unique) symmetric linear combination.

In our treatment of the $N=2$ case in Section \ref{Sec3} we shall see that the symmetric eigenfunction amounts to the so-called conical (or Mehler) function specialisation of the hypergeometric function ${}_2F_1$ (cf.~Chapter 14 in \cite{Dig10}). It was known already to Mehler \cite{Meh68} that the conical function admits a representation as a product of an elementary function and an integral that amounts to a Fourier transform of a function involving two hyperbolic cosines. For our purposes, this representation is of fundamental importance. As we shall show, it can be viewed as the first step  in a recursive construction of symmetric arbitrary-$N$ joint eigenfunctions of the commuting PDOs $H_k$. The point is that the plane wave in the Fourier transform can be viewed as the $N=1$ eigenfunction, whereas the function being transformed serves as a kernel function, connecting the free $N=1$ ODO $-i\hbar d/dx$ to the interacting $N=2$ PDOs~$H_1$ and~$H_2$.

In our recent joint paper \cite{HR12}, we presented a comprehensive study of kernel functions for all of the Calogero-Moser and Toda systems of $A_{N-1}$ type. We obtained, in particular, kernel functions connecting the hyperbolic PDOs for the $N$-particle case to those for the $(N-1)$-particle case. As we shall sketch in Section 2, this enables us to set up a recursion scheme for the construction of arbitrary-$N$ joint eigenfunctions. In the $N=2$ case the pertinent kernel function amounts to a function that occurs in a special integral representation of the conical function, as highlighted in the previous paragraph. For $N>2$ a crucial ingredient is given by the kernel identities obtained in Prop.~4.5 of~\cite{HR12}. (For a pair of defining Hamiltonians, our kernel identity coincides with the hyperbolic limit of an elliptic kernel identity due to Langmann \cite{Lan06}.)

The idea that such a recursive construction might be feasible appears to date back to work by Gutzwiller \cite{Gut81}, who used it to connect eigenfunctions for the periodic and nonperiodic nonrelativistic Toda systems. Since then, the formalism has been used for a number of different systems, in particular by Gerasimov, Kharchev and Lebedev \cite{GKL04} for the $g=\hbar/2$ specialization of the hyperbolic Calogero-Moser system and for the Toda systems, and by Felder and Veselov  \cite{FV09} to construct Baker-Akhiezer type eigenfunctions for the hyperbolic Calogero-Moser system with $g$ a negative integer multiple of $\hbar$. We also note that, for the trigonometric Calogero-Moser system, a recursive eigenfunction construction of a somewhat different type was presented by Awata, Matsuo, Odake and Shiraishi \cite{AMOS95}.

A crucial preliminary for our recursive scheme is an explicit construction of the commuting Hamiltonians~\eqref{H2}--\eqref{Hk}, which has as its starting point the classical Lax matrix
\be\label{L}
L(x,p)_{jk}\equiv \de_{jk}p_j + (1-\de_{jk})\frac{i\mu g}{2\sinh\big(\mu(x_j-x_k)/2\big)},\quad j,k=1,\ldots,N.
\ee
(We note that substituting $ix_j$ for $x_j$ in $L(x,p)$ yields a Lax matrix that is slightly different from the one introduced by Moser \cite{Mos75}: To arrive at Moser's Lax matrix one should replace the function $1/\sin$ by $\cot$.) Performing the canonical quantization substitutions
\be\label{psubs}
p_j\to-i\hbar\partial_{x_j},\quad j=1,\ldots,N,
\ee
in $L(x,p)$ produces an operator-valued matrix, whose symmetric functions $\hat{\Sigma}_k(L)(x)$ are well defined, since no ordering ambiguities occur. Indeed, a term in the expansion of a principal minor of \eqref{L} that depends on $p_j$ does not depend on $x_j$. However, $\hat{\Sigma}_2(L)(x)$ differs from the Hamiltonian \eqref{H2} in that it is missing the term proportional to $\hbar$ in the potential energy.

The problem can be resolved by adding to $L$ the diagonal $N\times N$ matrix
\be\label{E}
E(x)\equiv\diag\big(z_1(x),\ldots,z_N(x)\big),
\ee
with
\be\label{zj}
z_j(x)\equiv -\frac{i\mu g}{2}\sum_{k\neq j}\coth\big(\mu(x_j-x_k)/2\big),\quad j=1,\ldots,N.
\ee
In contrast to the symmetric functions of $L$, the functions $\Sigma_k(L(x,p)+E(x))$ with $k>1$ contain products of terms that depend on both $p_j$ and $x_j$, making their canonical quantization ambiguous. In~Section 4.3 of~\cite{Rui94} it was shown that the ordering ensuring commutativity is normal ordering: the procedure of putting $x$-dependent coefficients to the left of monomials in the momentum operators $-i\hbar\partial_{x_j}$, $j=1,\ldots,N$. We let $:\hat{\Sigma}_k(L+E)(x):$ denote the normal ordered PDOs obtained from $\Sigma_k(L(x,p)+E(x))$ by performing the canonical quantization substitutions \eqref{psubs}. Introducing the weight function
\be\label{cW}
\cW(g;x)\equiv \left(\prod_{1\leq j<k\leq N}4\sinh^2\big(\mu(x_j-x_k)/2\big)\right)^{g/\hbar},
\ee
the commuting Hamiltonians $H_k$ are then given by the formula
\be\label{sim}
H_k(x) = \cW(x)^{1/2}:\hat{\Sigma}_k(L+E)(x):\cW(x)^{-1/2},\quad k=1,\ldots,N.
\ee
(Recall we have required $g>0$, so we may and will take the positive square root.) Note that~$\cW$ and~$E$ are related by
\be\label{cWE}
 \cW(x)^{1/2}\big( -i\hbar\partial_{x_j} +E_{jj}(x)\big)  \cW(x)^{-1/2} =-i\hbar\partial_{x_j},\ \ \ j=1,\ldots ,N.
 \ee

From~\eqref{H2} it is plain that the Hamiltonian $H_2$ is formally self-adjoint on $L^2(\R^N,dx)$. Although it is not obvious from \eqref{sim}, this property holds true for the PDOs $H_k$ with $k>2$, too. This is clear by inspection from the following explicit formulas for~$H_1,\ldots,H_N$:
\be\label{HJ}
H_k=J_k,\ \ \ k=1,\ldots,N,
\ee
where~$J_k$ is the PDO
\be\label{Jk}
J_k\equiv \frac{1}{(N-k)!}\sum_{0\le l\le
[k/2]}\frac{1}{2^ll!(k-2l)!}\sum_{\sigma \in S_N}\sigma
\big(h(x_1-x_2)\cdots h(x_{2l-1}-x_{2l})\hat{p}_{2l+1}\cdots \hat{p}_k\big),
\ee  
with
\be\label{ph}
\hat{p}_j\equiv -i\hbar\partial_{x_j},\ \ \ j=1,\ldots, N,\ \ \ h(x)\equiv g(\hbar-g)\mu^2/4\sinh^2(\mu x/2). 
\ee

The PDOs $J_k$ given by~\eqref{Jk} (and their elliptic generalizations) were first studied by Olshanetsky and Perelomov~\cite{OP83}, as quantizations of classical Hamiltonians introduced in~\cite{SK75, Woj77}. Their equality to the Hamitonians~$H_k$ given by~\eqref{sim}, as encoded in~\eqref{HJ}, again follows from~Section 4.3 of~\cite{Rui94}. More specifically, this reference dealt with the elliptic versions of the commuting PDOs. Using a uniqueness result by Oshima and Sekiguchi \cite{OS95}, it was shown in~{\it loc.~cit.} that the elliptic PDO $H_k$ can be expressed as a linear combination of the elliptic  commuting PDOs~$J_1,J_2,\ldots,J_k$ and a constant. (Cf.~Eq.~(4.86) in~{\it loc.~cit.}) For the hyperbolic case, however, one can take all particle distances to infinity to deduce that the linear combination reduces to~\eqref{HJ}. Indeed, $L$~\eqref{L} becomes diagonal for~$|x_j-x_k|\to\infty$, so this readily follows from~\eqref{sim}--\eqref{cWE} and~\eqref{Jk}--\eqref{ph}.   

We are now in a position to give a more precise sketch of our main results. To achieve analytic control over the recursive scheme, it will be important to allow complex~$g$. However, we need to impose the restriction 
\be\label{gres}
\re g\geq\hbar.
\ee
 This constraint is presumably stronger than necessary, with $\re g>0$ being our conjectured necessary and sufficient condition for getting absolute convergence of the pertinent integrals (and hence analyticity in~$g$ for~$\re g$ positive). This can be verified to be the case for small $N$, but our recursive arguments hinge on the explicit evaluation of integrals obtained in Appendix~C of our previous paper~\cite{HR13}, and these integrals cannot be used for arbitrary $N$ when~$\re g$ varies over~$(0,\hbar)$.

 With~\eqref{gres} in force, we can invoke the latter integrals to proceed recursively in the number of variables $(x_1,\ldots,x_N)$. In this way we construct joint eigenfunctions $\Psi_N((x_1,\ldots,x_N),(p_1,\ldots,p_N))$ of the Hamiltonians $H_k$ with eigenvalues given by
\be\label{PsiEqs}
H_k(x)\Psi_N(x,p) = S_k(p)\Psi_N(x,p),\quad k=1,\ldots,N,
\ee
where
\be
S_k(p)\equiv \sum_{1\leq j_1<\cdots<j_k\leq N}p_{j_1}\cdots p_{j_k}
\ee
is the $k$th elementary symmetric function of the momenta $p_1,\ldots,p_N$. Moreover, we obtain certain analyticity features referring to the `geometric' and `spectral' variables~$x$ and~$p$, and the coupling parameter~$g$.

To ease the notation it is actually more convenient to work with the dimensionless quantities
\be\label{dLess}
\lambda\equiv g/\hbar,\quad (t_1,\ldots,t_N)\equiv (\mu x_1/2,\ldots\mu x_N/2),\quad (u_1,\ldots,u_N)\equiv (2p_1/\hbar\mu,\ldots,2p_N/\hbar\mu),
\ee
and rewrite $\Psi_N(x,p)$ as
\begin{multline}\label{PsiN}
\Psi_N\big(g;(x_1,\ldots,x_N),(y_1,\ldots,y_N)\big)\\ = W(g/\hbar;\mu x/2)^{1/2}F_N\big(g/\hbar;(\mu x_1/2,\ldots,\mu x_N/2),(2p_1/\hbar\mu,\ldots,2p_N/\hbar\mu)\big),
\end{multline}
where (cf.~\eqref{cW})
\be\label{cWW}
W(\lambda;t)\equiv \cW(\lambda \hbar;2t/\mu)=\prod_{1\leq j<k\leq N}[4\sinh^2(t_j-t_k)]^{\lambda}.
\ee
As a result of our construction, we arrive at simple and explicit integral representations of the similarity-transformed joint eigenfunctions  $F_N$. In particular, introducing $N(N+1)/2$ variables $t_{nj}$ with $n=1,\ldots,N$ and $j=1,\ldots,n$, and  identifying $t_{Nj}$ with~$t_j$ for $j=1,\ldots,N$, we obtain
\begin{multline}\label{intFN1}
F_N(\lambda;t,u) =  \int_{\R^{N(N-1)/2}}\prod_{n=1}^{N-1}\frac{\prod_{1\leq j<k\leq n}[4\sinh^2(t_{nj}-t_{nk})]^\lambda}{n!\prod_{j=1}^{n+1}\prod_{k=1}^n\big[2\cosh(t_{n+1,j}-t_{nk})\big]^\lambda}\\ \times \exp\left(i\sum_{n=1}^Nu_n\left(\sum_{j=1}^nt_{nj}-\sum_{j=1}^{n-1}t_{n-1,j}\right)\right)\prod_{n=1}^{N-1}\prod_{j=1}^n dt_{nj}.
\end{multline}

An alternative way of writing this integral that makes its recursive structure more apparent involves integrations over `Weyl chambers'~$G_1,\ldots, G_{N-1}$ given by
\be\label{Gk}
G_k\equiv \{ x\in\R^k\mid x_k<x_{k-1}<\cdots <x_1\}.
\ee
It reads
\bea\label{intFN2}
F_N(\lambda;t,u) & = & \exp\Big(iu_N\sum_{j=1}^N t_j\Big)\prod_{n=1}^{N-1} \int_{G_n} dt_{n1}\cdots dt_{nn}  \exp\big(i(u_n-u_{n+1})( t_{n1}+\cdots +t_{nn})\big)
 \nonumber \\
 &  & \times\frac{ 
\prod_{1\leq j<k\leq n}[2\sinh(t_{nj}-t_{nk})]^{2\lambda}}{\prod_{j=1}^{n+1}\prod_{k=1}^n\big[2\cosh(t_{n+1,j}-t_{nk})\big]^\lambda} .
\eea
As will be seen in the main text, in these representations we can allow $t$ and $u$ to vary over certain subsets of $\C^{N}$, but  to ease the exposition in this introduction we choose $u$ real and let $t_j$ vary over the strip~$\im t_j\in (-\pi/2,\pi/2)$. Also,  $\lambda$ is restricted via~\eqref{gres} and~\eqref{dLess}. Thus we wind up with the set  
\be\label{varres}
\{ (\lambda,t,u)\in\C\times\C^{N}\times \R^{N}\mid \re \lambda \ge 1,\ \ |\im t_j|<\pi/2,\ \ j=1,\ldots,N\},
\ee
on which the integrals will be proved to converge absolutely. 
(Note that in~\eqref{intFN2} all of the implied logarithms in the numerator have positive arguments thanks to the ordering in~\eqref{Gk}, so they may and will be chosen real.) This yields analyticity in $t$ in the relevant polystrip and in $\lambda$ for~$\re \lambda > 1$.

We would like to mention that Matsuo \cite{Mat92} sketched a procedure for obtaining a representation of joint eigenfunctions for the hyperbolic nonrelativistic Calogero-Moser Hamiltonians from so-called hypergeometric solutions of trigonometric Knizhnik-Zamolodchikov equations of $A_{N-1}$ type. It is of interest that the resulting representation bears some resemblance to the representations \eqref{intFN1}--\eqref{intFN2}. More specifically, it is also given as an integral over $N(N-1)/2$ variables with part of the integrand having the same type of product structure as above. On the other hand, the integration contour and part of the integrand are left unspecified in~{\it loc.~cit.} More recently, Borodin and Gorin \cite{BG13} presented a representation of eigenfunctions for the Hamiltonian $H_1^2-2H_2$ in terms of an integral over a certain $N(N-1)/2$-dimensional polytope whose integrand again has the same type of product structure as in \eqref{intFN1}--\eqref{intFN2}. 

 We conclude this introduction by sketching the content and organisation of the paper in more detail. In Section 2 we recall the relevant kernel functions and corresponding   identities, and present the recursive construction in a formal fashion (i.e.~without worrying about convergence of integrals, etc.). As we shall see, the key to obtain the explicit eigenvalues in~\eqref{PsiEqs} is the following recurrence for the elementary symmetric functions $S_k^{(M)}$ of~$M$ nonzero numbers~$a_1,\ldots,a_M$:
\be\label{Srec}
\sum_{l=0}^ka_M^l\binom{M-k+l}{l}S_{k-l}^{(M-1)}(a_1-a_M,\ldots,a_{M-1}-a_M) = S_k^{(M)}(a_1,\ldots,a_M),
\ee
where $M\geq 1$, $k=1,\ldots,M$, and
\be
S_M^{(M-1)}\equiv 0,\quad S_0^{(M-1)}\equiv 1.
\ee
It will transpire from Section~2 that the applicability of this (slightly nonobvious, but easily checked) recurrence to the problem at hand hinges on using the above matrix $L(x,p)+E(x)$ to express the commuting PDOs. In that connection we should add that a direct use of the explicit formulas~\eqref{HJ}--\eqref{ph} to establish the desired recursive structure appears intractable.  

In Section~3 we consider the case~$\lambda=1$, for which we get (cf.~\eqref{HJ}--\eqref{ph} and~\eqref{dLess})
\be
H_k=(-i\hbar \mu/2)^k\sum_{1\le j_1<\cdots <j_k\le N}\  \frac{\partial}{\partial t_{j_1}}\cdots \frac{\partial}{\partial t_{j_k}},\ \ \ k=1,\ldots,N.
\ee
This $\lambda$-choice corresponds to free cases of the hyperbolic relativistic Calogero-Moser system studied in Section~3 of our recent joint paper~\cite{HR13}, for which the commuting Hamiltonians reduce to analytic difference operators with constant coefficients. Furthermore, the associated kernel functions are basically equal to those for $\lambda=1$, so that we can take over the findings from~{\it loc.~cit.} by a suitable reparametrization. In particular, the joint eigenfunction is proportional to the kernel of the multivariate sine transform.

In Section 4 we focus on the analytic aspects of the first step of the scheme, for which we need only require~$\re \lambda >0$.  This step leads from the free one-particle plane-wave eigenfunction to the interacting two-particle eigenfunction, which amounts to a conical function (after removal of a center-of-mass factor). We obtain some of its properties, using arguments that can be generalised to the arbitrary-$N$ case. The results are encoded in Props.~\ref{h2Prop}--4.4. They consist of a holomorphy domain, the joint eigenvalue equations and uniform decay bounds for real and (suitably restricted) complex~$u$, resp.

In Section 5 we consider the step from $N=2$ to $N=3$ for $\re\lambda>0$. Although this leads to novel difficulties, we are still able to prove the counterparts of Props.~\ref{h2Prop}--4.4. The main tool for obtaining the bounds encoded in Props.~\ref{F3bProp}--5.4 is the explicit evaluation of certain integrals we encountered in Appendix~C  of~\cite{HR13}.  

Once our arguments leading to Props.~\ref{h3Prop}--5.4 are well understood, it is no longer hard to understand the inductive step treated in Section 6, yielding our main results Theorems~6.1--6.4, for which we require $\re\lambda>1$. Again, this relative simplicity hinges on the explicit integrals obtained in~\cite{HR13}.

In Section 7 we establish the connection to previous work by Heckman and Opdam~\cite{HO87,Opd93}. More specifically, we show that the `center-of-mass' eigenfunction $F_N(\lambda;t,u)$, $\sum t_j=\sum u_j=0$, is proportional to their hypergeometric function associated with the root system $A_{N-1}$.

\section{Formal structure of the recursion scheme}\label{Sec2}
We begin this section by transforming the eigenvalue equations \eqref{PsiEqs} into an equivalent set of equations for the functions $F_N$. It is convenient to slightly modify the  PDOs $:\hat{\Sigma}_k(L+E)(x):$ by switching to dimensionless counterparts. First, we renormalise the Lax matrix $L$ and the diagonal matrix $E$ by introducing
\be\label{cL}
\cL(t,u)_{jk}\equiv \frac{2}{\hbar\mu}L(2t/\mu,\hbar\mu u/2)_{jk} = \de_{jk}u_j + (1-\de_{jk})\frac{i\lambda}{\sinh(t_j-t_k)},
\ee
and
\be\label{cE}
\cE(t)\equiv \diag\big(w_1(t),\ldots,w_N(t)\big)
\ee
with
\be\label{wj}
w_j(t)\equiv \frac{2}{\hbar\mu}z_j(2t/\mu) = -i\lambda\sum_{k\neq j}\coth(t_j-t_k),
\ee
cf.~\eqref{L} and \eqref{E}--\eqref{zj}. Then we let $:\hat{\Sigma}_k(\cL+\cE)(t):$ denote the normal ordered PDOs obtained from $\Sigma_k(\cL(t,u)+\cE(t))$ by performing the substitutions
\be\label{usubs}
u_j\to -i\partial_{t_j},\quad j=1,\ldots,N.
\ee
From \eqref{sim} and \eqref{PsiN}--\eqref{dLess} we then see that \eqref{PsiEqs} is equivalent to
\be
:\hat{\Sigma}_k(\cL+\cE)(t):F_N(t,u) = S_k(u)F_N(t,u),\quad k=1,\ldots,N.
\ee

We also have occasion to use the dimensionless Hamiltonians
\be\label{sim2}
\cH_k(t) \equiv W(t)^{1/2}:\hat{\Sigma}_k(\cL+\cE)(t):W(t)^{-1/2},\quad k=1,\ldots,N.
\ee
Using~\eqref{sim}--\eqref{ph}, we see that they are explicitly given by 
\be\label{cHk}
\cH_k =
\frac{1}{(N-k)!}\sum_{0\le l\le
[k/2]}\frac{1}{2^ll!(k-2l)!}\sum_{\sigma \in S_N}\sigma
\big(h_r(t_1-t_2)\cdots h_r(t_{2l-1}-t_{2l})\hat{u}_{2l+1}\cdots \hat{u}_k\big),
\ee  
with
\be\label{uhr}
\hat{u}_j\equiv -i\partial_{t_j},\ \ \ j=1,\ldots, N,\ \ \ h_r(t)\equiv \lambda(1-\lambda )/\sinh^2t. 
\ee

With this modification of the PDOs in force, we continue to recall the relevant kernel functions and identities from Subsection IV\,A.2 of \cite{HR12}. The first kernel function is
\be\label{cK}
\cK(\lambda;t,s)\equiv \prod_{j,k=1}^N\big[2\cosh(t_j-s_k)\big]^{-\lambda}.
\ee
It satisfies the identities
\be
\big(:\hat{\Sigma}_k(\cL+\cE)(t): - :\hat{\Sigma}_k(\cL+\cE)(-s):\big)\cK(t,s) = 0,\quad k=1,\ldots,N,
\ee
so it connects the $N$-variable PDOs to themselves. We expect that this kernel function will be important to obtain further properties of the joint eigenfunctions $F_N$.  However, this is beyond the scope of the present paper. Here, we need the kernel identities relating the PDOs in $N$ variables to the PDOs in $N-1$ variables. In \cite{HR12} such identities were obtained in two ways: first, by multiplying $\cK(t,s)$ by a suitable exponential factor and then sending $s_N$ to infinity, and second, by exploiting a limit from the relativistic hyperbolic Calogero-Moser case. In order to avoid ambiguities we shall henceforth indicate the number of variables $N$ on which a PDO or kernel function depends, and also use the superscript $\sharp$ to indicate that the first argument of a kernel function depends on one more variable than the second one.
With these conventions in effect, we recall from Prop.~4.5 in \cite{HR12} that the kernel function
\be\label{cKs}
\cK_N^\sharp(\lambda;t,s)\equiv \prod_{j=1}^N\prod_{k=1}^{N-1}\big[2\cosh(t_j-s_k)\big]^{-\lambda},\quad N>1,
\ee
satisfies the eigenfunction identity
\be\label{key1}
:\hat{\Sigma}_N^{(N)}(\cL+\cE)(t):\cK_N^\sharp(t,s) = 0,
\ee
and the kernel identities
\be\label{key2}
\left(:\hat{\Sigma}_k^{(N)}(\cL+\cE)(t): - :\hat{\Sigma}_k^{(N-1)}(\cL+\cE)(-s):\right)\cK_N^\sharp(t,s) = 0,\quad k=1,\ldots,N-1.
\ee

Having assembled the necessary ingredients, we are ready to describe the `calculational' crux of the recursive scheme. Assume we have obtained a joint eigenfunction $F_{N-1}((t_1,\ldots,t_{N-1}),(u_1,\ldots,u_{N-1}))$ of the PDOs $:\hat{\Sigma}_k^{(N-1)}(\cL+\cE)(t):$, with eigenvalues given by
\be\label{recass}
:\hat{\Sigma}_k^{(N-1)}(\cL+\cE)(t):F_{N-1}(t,u) = S_k^{(N-1)}(u)F_{N-1}(t,u),\quad k=1,\ldots,N-1,
\ee
where $S_k^{(M)}(a_1,\ldots,a_M)$ denotes the elementary symmetric functions of the $M$ numbers $a_1,\ldots,a_M$. Consider the function $F_N(t,u)$ with arguments $t,u\in\C^N$, given formally by
\be\label{FN}
F_N(t,u)\equiv \frac{e^{iu_N\sum_{j=1}^Nt_j}}{(N-1)!}\int_{\R^{N-1}}dsW_{N-1}(s)\cK_N^\sharp(t,s)F_{N-1}(s,(u_1-u_N,\ldots,u_{N-1}-u_N)).
\ee
For now, we do not address the convergence of the integral, but we do restrict attention to positive $\lambda $ to prevent manifest divergencies. We also assume that we are allowed to differentiate any number of times under the integral sign.

Acting with $:\hat{\Sigma}_k^{(N)}(\cL+\cE)(t):$ on $F_N(t,u)$ and shifting through the plane wave up front, the PDO is transformed into $:\hat{\Sigma}_k^{(N)}(\cL+\cE+u_N\mathbf{1}_N)(t):$, where $\mathbf{1}_N$ denotes the $N\times N$ identity matrix. Making use of the expansion
\be\label{Sexp}
:\hat{\Sigma}_k^{(N)}\big(\cL+\cE+u_N\mathbf{1}_N\big)(t):\ =\ \sum_{l=0}^ku_N^l\binom{N-k+l}{l}:\hat{\Sigma}_{k-l}^{(N)}(\cL+\cE)(t):,
\ee
we can act on the kernel function and apply \eqref{key1}--\eqref{key2}. Using formal self-adjointness on $L^2(\R^{N-1},W_{N-1}(s)ds)$ of the PDOs at hand (which follows from the manifest formal self-adjointness on $L^2(\R^N,dt)$ of the PDOs~$\cH_k$ given by~\eqref{cHk}--\eqref{uhr} with $\lambda>0$), we transfer their action to the factor $F_{N-1}$, noting that the argument $-s$ should be replaced by $s$, since there is no complex conjugation in \eqref{FN}. 

At this stage we can make use of our assumption \eqref{recass}, thus arriving at
\be\label{eForm}
:\hat{\Sigma}_k^{(N)}(\cL+\cE)(t):F_N(t,u) = \sum_{l=0}^ku_N^l\binom{N-k+l}{l}S_{k-l}^{(N-1)}(u_1-u_N,\ldots,u_{N-1}-u_N)F_N(t,u).
\ee
Invoking the recurrence \eqref{Srec} for the elementary symmetric functions, we can rewrite this eigenvalue formula as
\be\label{efk}
:\hat{\Sigma}_k^{(N)}(\cL+\cE)(t):F_N(t,u) = S_k^{(N)}(u)F_N(t,u),\quad k=1,\ldots,N.
\ee
Comparing this to~\eqref{recass}, we find that we have established a recursive procedure to construct joint eigenfunctions for any number of particles $N$. Indeed, we can start the recursion at $N=1$ with the plane wave
\be\label{F1}
F_1(t,u)\equiv \exp(itu),
\ee
which clearly satisfies
\be\label{1eig}
:\hat{\Sigma}_1^{(1)}(\cL+\cE)(t):F_1(t,u) = -i\partial_tF_1(t,u) = uF_1(t,u).
\ee
Proceeding by induction on $N$, it is now straightforward to verify that this recursive procedure yields the integral representation \eqref{intFN1} of $F_N$.

\section{The free case $\lambda =1$}
In this section we begin to back up the formal manipulations in the previous section by rigorous analysis. For the special choice $\lambda =1$ this turns out to be quite easy, inasmuch as we need only make some rather obvious changes in equations pertaining to the two free cases $b=a_{\pm}$ in our previous paper~\cite{HR13}. We shall cite these equations by using a prefix~I.

Specifically, choosing
\be\label{apam}
a_{-}=\pi,\ \ \ \ a_{+}=2, 
\ee
and replacing the variables $x,y$ of~{\it loc.~cit.} by~$t,u$, the function I(2.19) turns into the plane wave $F_1(t,u)$~\eqref{F1}.  Moreover, the kernel function I(3.1) and weight function I(3.2) amount to the kernel function (cf.~\eqref{cKs})
\be\label{K1}
\cK_N^{\sharp}(1;t,s)=\prod_{j=1}^N\prod_{k=1}^{N-1}\frac{1}{2\cosh(t_j-s_k)},
\ee
 and weight function (cf.~\eqref{cWW})
 \be\label{W1}
W_N(1;t)=\prod_{1\le j< k\le N}4\sinh^2(t_j-t_k).
\ee
 Finally, the recurrence I(2.16) becomes the recurrence~\eqref{FN}. 

As shown in~{\it loc.~cit.}, the relevant integrals can be recursively evaluated explicitly. Thus we need only use the above reparametrizations to deduce the following theorem from Theorem~3.1 in~\cite{HR13}.

\begin{theorem}
We have
\be\label{FNfree}
F_N(1;t,u)=\prod_{1\leq j<k\leq N}
\frac{-i\pi}{4\sinh(t_j-t_k)\sinh(\pi (u_j-u_k)/2)}\cdot 
\sum_{\sigma\in S_N}(-)^\sigma\exp\big(i t\cdot\sigma(u)\big),
\ee
where $t,u\in\R^N$, and $t_j\ne t_k, u_j\ne u_k$ for $j\ne k$.
\end{theorem}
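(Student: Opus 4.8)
The plan is to deduce Theorem~3.1 here from Theorem~3.1 of~\cite{HR13} purely by the dictionary of reparametrizations already set up in this section, so that essentially no new analysis is needed. First I would record precisely which objects in~\cite{HR13} correspond to which objects here under the substitution $a_{-}=\pi$, $a_{+}=2$, $x\mapsto t$, $y\mapsto u$: the plane wave I(2.19) becomes $F_1(t,u)=\exp(itu)$ via~\eqref{F1}; the kernel function I(3.1) becomes $\cK_N^\sharp(1;t,s)$ as in~\eqref{K1}; the weight function I(3.2) becomes $W_N(1;t)$ as in~\eqref{W1}; and the recursion I(2.16) becomes precisely~\eqref{FN} with $\lambda=1$. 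Since the recursion~\eqref{FN} together with the starting point~\eqref{F1} \emph{defines} $F_N(1;t,u)$ (cf.\ the end of Section~\ref{Sec2}), matching these four ingredients term by term shows that $F_N(1;t,u)$ coincides with the function produced by the recursion I(2.16) of~\cite{HR13} in the free case $b=a_{-}$ (equivalently $b=a_+$, up to the trivial symmetry), for which Theorem~3.1 of~\cite{HR13} gives an explicit closed form.

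Next I would carry the explicit formula from~\cite{HR13} through the same dictionary. There the free joint eigenfunction is a product of hyperbolic prefactors, one factor for each pair $j<k$, times an antisymmetrized sum of plane waves $\sum_{\sigma\in S_N}(-)^\sigma\exp(it\cdot\sigma(u))$; the prefactors involve $\sinh$ of the geometric variables and $\sinh$ of the spectral variables with the scales set by $a_{-}$ and $a_{+}$. Substituting $a_{-}=\pi$, $a_{+}=2$ and relabeling $x\mapsto t$, $y\mapsto u$ turns the pair factor into exactly $-i\pi/\big(4\sinh(t_j-t_k)\sinh(\pi(u_j-u_k)/2)\big)$, yielding~\eqref{FNfree}. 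I would also note the side conditions: $t,u\in\R^N$ with $t_j\ne t_k$ and $u_j\ne u_k$ for $j\ne k$, which are exactly what makes the prefactors and the cited result well defined, and which transcribe directly from the hypotheses in~\cite{HR13}.

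The only genuinely substantive point to check — and the step I expect to be the main obstacle — is that the analytic hypotheses under which Theorem~3.1 of~\cite{HR13} is proved (absolute convergence of the recursively defined integrals, permissibility of the formal manipulations, the specific normalization of the kernel and weight functions) survive the reparametrization~\eqref{apam} without alteration. Concretely, one must confirm that the choice $a_{-}=\pi$, $a_{+}=2$ lies in the admissible parameter range of~\cite{HR13} and that the constants appearing there (e.g.\ in the measure $W_{N-1}(s)\,ds$ and in any $c$-function type normalizations) reduce to the ones in~\eqref{K1}--\eqref{W1}; any stray multiplicative constant would propagate into~\eqref{FNfree}. Once this bookkeeping is done, the theorem follows with no further work: the recursion~\eqref{FN}, the base case~\eqref{F1}, and the matched ingredients pin down $F_N(1;t,u)$ uniquely, and~\eqref{FNfree} is its value as transported from~\cite{HR13}.

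Finally, as an independent sanity check I would verify~\eqref{FNfree} directly for $N=1$ (where it collapses to $\exp(itu)=F_1(t,u)$, the empty product being $1$) and for $N=2$, where the single integral in~\eqref{FN} is an elementary Fourier transform of a ratio of hyperbolic cosines; carrying out that computation and comparing with the $N=2$ instance of~\eqref{FNfree} both confirms the normalization constant $-i\pi$ and illustrates the mechanism of the general recursion. This check is not needed for the proof but makes the transcription from~\cite{HR13} fully transparent.
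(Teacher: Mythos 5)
Your proposal matches the paper's own argument: the paper likewise proves Theorem~3.1 by identifying, under the reparametrization $a_-=\pi$, $a_+=2$ and $x,y\mapsto t,u$, the plane wave I(2.19), kernel I(3.1), weight I(3.2) and recurrence I(2.16) of~\cite{HR13} with \eqref{F1}, \eqref{K1}, \eqref{W1} and \eqref{FN}, and then transcribing Theorem~3.1 of~\cite{HR13} into \eqref{FNfree}. The bookkeeping and sanity checks you add are consistent with, and no more than elaborations of, that same route.
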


Multiplying~\eqref{FNfree} by $W_N(1;t)^{1/2}$ and reverting to the positions $x$ and momenta $p$, we obtain (cf.~\eqref{dLess}--\eqref{cWW})
\be
\Psi_N(\hbar;x,p)= \prod_{1\leq j<k\leq N}
\frac{-i\pi}{2\sinh(\pi (p_j-p_k)/\hbar\mu)}\cdot
\sum_{\sigma\in S_N}(-)^\sigma\exp\big(i x\cdot\sigma(p)/\hbar\big).
\ee
Omitting the $p$-dependent product, we therefore obtain the kernel of the multivariate sine transform, in accord with~\eqref{PsiEqs} for~$g=\hbar$.

\section{The step from $N=1$ to $N=2$}\label{Sec3}
In this section we consider the $N=2$ case. Until further notice, we choose $u\in\R^2$. We also need to restrict the complex coupling~$\lambda$. To this end we introduce
\be\label{Lam}
\Lambda_{\kappa}\equiv \{ \lambda\in\C\mid \re \lambda >\kappa\},\ \ \ \kappa\ge 0.
\ee
In this section and the next one, we take $\lambda\in\Lambda_0$, but for the case $N>3$ treated in Section~6 we restrict attention to $\Lambda_1$ and its closure $\overline{\Lambda_1}$.  

The integrand arising in the first step of the recursive scheme is given by (cf.~\eqref{FN})
\be\label{I2}
\begin{split}
I_2(\lambda;t,u,s) &\equiv \cK_2^\sharp(\lambda;t,s)F_1(s,u_1-u_2)\\ &= \exp\big(is(u_1-u_2)\big)\prod_{j=1}^2\big[2\cosh(t_j-s)\big]^{-\lambda}.
\end{split}
\ee
In the $s$-plane it has upward/downward sequences of singularities located at
\be\label{I2sing}
s=t_j+\frac{i\pi}{2}(2n+1),\quad s=t_j-\frac{i\pi}{2}(2n+1),\quad j=1,2,\quad n\in\N.
\ee
Its asymptotic behavior for~$|\re s|\to\infty$ readily follows from the elementary estimate
\be\label{chest}
|\cosh(z)^{-\lambda}|<C(\lambda,|\im z|) \exp(-\re\lambda |\re z|), \ \ \ |\im z|<\pi/2,\ \ \ \lambda\in\Lambda_0,
\ee
where~$C$ is a continuous function on~$\Lambda_0\times [0,\pi/2)$.
Specifically, it satisfies
\be\label{I2as}
I_2(\lambda;t,u,s) = O(\exp(-2\re \lambda\,|\re s|)),\quad |\re s|\to\infty,\ \ \ \ u\in\R^2,
\ee
where the implied constant is uniform for $( \lambda,t,\im s)$ varying over compact subsets of $\Lambda_0\times\C^2\times\R$. 

Requiring at first $t\in\R^2$, the $s$-contour $\R$ stays below/above the upward/downward sequences~\eqref{I2sing}. Thus the function
\be\label{F2}
F_2(\lambda;t,u)\equiv \exp(iu_2(t_1+t_2))\int_{\R}ds I_2(\lambda;t,u,s),\quad \lambda>0,\quad t,u\in\R^2,
\ee
is well defined. 
Taking the uniform bound~\eqref{I2as} into account, we now infer that~$F_2(\lambda;t,u)$ continues analytically to~$\lambda$ in the right half plane~$\Lambda_0$ and $t\in\C^2$ satisfying $|\im t_j|<\pi/2$, $j=1,2$. (Indeed, by virtue of Cauchy's integral formula the~$t_j$-partials of the integrand satisfy the same bound, whereas for~$\lambda$-partials it suffices to replace the constant~2 in the exponent by any~$\rho<2$. From this the asserted holomorphy follows; moreover, it implies we can take partials of $F$ through the integral sign.)

The bound~\eqref{I2as} also entails we may shift the contour $\R$ up and down as long as we do not encounter any of the singularities. In particular, with $t\in\R^2$, we can shift the contour $\R$ in \eqref{F2} down to $\R-i\pi/2+i\epsilon$ for any $\epsilon\in(0,\pi/2)$. From the resulting integral representation we then see that $F_2$ extends holomorphically to $\im t_j\in (-\pi+\epsilon,\epsilon)$, $j=1,2$. This argument can be iterated, and we can likewise move the contour up step by step. In this way we can reach any strip of width $\pi$. More precisely, introducing the domain
\be
\cA_2\equiv \{t\in\C^2 \mid  |\im(t_1-t_2)|<\pi\},
\ee
we have the following result.

\begin{proposition}\label{h2Prop}
Let $u\in\R^2$. Then $F_2(\lambda;t,u)$ is holomorphic for $(\lambda,t)\in\Lambda_0\times \cA_2$.
\end{proposition}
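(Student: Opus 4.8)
The plan is to analytically continue the real-$t$ integral \eqref{F2} by deforming the $s$-contour $\R$ into horizontal lines $\R+ih$, $h\in\R$, and to recognise $\cA_2$ as precisely the set of $t$ reachable in this way. For $h\in\R$ put
\[
B_h\equiv\{t\in\C^2\mid |\im t_j-h|<\pi/2,\ \ j=1,2\},
\]
so that $\bigcup_{h\in\R}B_h=\cA_2$: if $\im t_1,\im t_2\in(h-\pi/2,h+\pi/2)$ then $|\im(t_1-t_2)|<\pi$, and conversely one may take $h=(\im t_1+\im t_2)/2$. The point is that for $t\in B_h$ the line $\R+ih$ avoids all the singularities \eqref{I2sing}, which lie on the lines $\im s=\im t_j\pm(2n+1)\pi/2$ and are therefore at distance at least $(2n+1)\pi/2-|\im t_j-h|>0$ from $\R+ih$.

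First I would fix $h$ and define, for $\lambda\in\Lambda_0$ and $t\in B_h$,
\[
F_2^{(h)}(\lambda;t,u)\equiv\exp\big(iu_2(t_1+t_2)\big)\int_{\R+ih}ds\,I_2(\lambda;t,u,s).
\]
Writing $s=\sigma+ih$ and using that $u$ is real, we have $|I_2(\lambda;t,u,s)|=e^{-h(u_1-u_2)}\prod_{j=1}^2|2\cosh(t_j-\sigma-ih)|^{-\re\lambda}$, so the estimate \eqref{chest} (equivalently \eqref{I2as}) shows the integral converges absolutely, uniformly for $(\lambda,t)$ in compact subsets of $\Lambda_0\times B_h$. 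Applying the same bound to the $t_j$- and $\lambda$-derivatives of the integrand (via Cauchy's formula, and with the exponent $2\re\lambda$ replaced by any smaller positive number in the $\lambda$-case, exactly as in the discussion preceding \eqref{F2}) shows that $F_2^{(h)}$ is holomorphic on $\Lambda_0\times B_h$ and that one may differentiate under the integral sign. For $h=0$ this is the function already obtained, which agrees with $F_2$ on $\{\lambda>0\}\times\R^2\subset\Lambda_0\times B_0$.

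Next I would show that these charts glue. Suppose $t\in B_h\cap B_{h'}$ with, say, $h<h'$; since the intervals $(h-\pi/2,h+\pi/2)$ and $(h'-\pi/2,h'+\pi/2)$ then overlap, $h'-h<\pi$. The open strip $\{h<\im s<h'\}$ contains no singularity \eqref{I2sing}: from $t\in B_{h'}$ one gets $\im t_j+\pi/2>h'$, so every upward singularity height $\im t_j+(2n+1)\pi/2$ exceeds $h'$, and from $t\in B_h$ one gets $\im t_j-\pi/2<h$, so every downward one is below $h$. Hence Cauchy's theorem applied to the rectangle with horizontal sides in $\R+ih$ and $\R+ih'$, whose vertical sides at $\re s=\pm R$ contribute $O(e^{-2\re\lambda\,R})$ and vanish as $R\to\infty$ by \eqref{I2as}, gives $F_2^{(h)}=F_2^{(h')}$ on $\Lambda_0\times(B_h\cap B_{h'})$. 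Since every nonempty overlap $B_h\cap B_{h'}$ arises in this way, the $F_2^{(h)}$ patch to a single holomorphic function on $\Lambda_0\times\bigcup_hB_h=\Lambda_0\times\cA_2$ which restricts to $F_2$ on the real domain; by uniqueness of analytic continuation this is the desired continuation of $F_2$.

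I expect the one step needing genuine care to be the verification that the strip between two admissible contours is singularity-free, since this is exactly where the bound $|\im(t_1-t_2)|<\pi$ — rather than a constraint on the individual imaginary parts — enters; the rest is a routine use of the uniform decay \eqref{I2as} together with Morera's and Cauchy's theorems. Incidentally, the restriction $u\in\R^2$ is used only to keep $|e^{is(u_1-u_2)}|=e^{-h(u_1-u_2)}$ bounded on the horizontal contours; allowing complex $u$ would force the additional condition $|\im(u_1-u_2)|<2\re\lambda$, as in the complex-$u$ statements proved later in this section.
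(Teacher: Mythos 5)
Your proof is correct and follows essentially the same route as the paper: deform the $s$-contour to a horizontal line at height $h=(\im t_1+\im t_2)/2$, observe that $|\im(t_1-t_2)|<\pi$ is exactly what keeps this line clear of the singularities \eqref{I2sing}, and use the uniform decay \eqref{I2as} for holomorphy and for justifying the contour move. The only difference is presentational: you glue the charts $B_h$ with a single Cauchy-rectangle argument, while the paper reaches the contour $\R+i\eta$ by the iterated step-by-step shifts described just before the proposition.
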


\begin{proof}
Fixing $(\lambda,t)\in \Lambda_0\times \cA_2$, we introduce
\be
\eta\equiv \im(t_1+t_2)/2.
\ee
Then we have
\be
\im t_j\pm \pi/2\gtrless \eta,\quad j=1,2,
\ee
so the contour $\R+i\eta$ stays below/above the upward/downward sequences of singularities \eqref{I2sing}. Hence we can arrive at these $t$-values by successive shifts of the contour, as described above, without encountering any singularities. This yields holomorphy in a suitably restricted neighborhood of $(\lambda,t)$, and thus in $\Lambda_0\times \cA_2$.
\end{proof}

We continue to show that $F_2$ is a joint eigenfunction of the $N=2$ PDOs with the expected eigenvalues.  

\begin{proposition}\label{e2Prop}
Let $u\in\R^2$. For all $(\lambda,t)\in \Lambda_0\times \cA_2$, we have the joint eigenfunction property
\be\label{e2}
:\hat{\Sigma}_k^{(2)}(\cL+\cE)(t):F_2(t,u) = S_k^{(2)}(u)F_2(t,u),\quad k=1,2.
\ee
\end{proposition}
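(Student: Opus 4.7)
The plan is to rigorously execute for $N=2$ the formal manipulations carried out in Section~\ref{Sec2} in general. Concretely, I would fix $(\lambda,t)\in\Lambda_0\times\cA_2$ and $u\in\R^2$, and apply $:\hat{\Sigma}_k^{(2)}(\cL+\cE)(t):$ directly to the defining integral representation~\eqref{F2} of $F_2(t,u)$. Bringing the PDO under the integral sign is justified by the uniform bound~\eqref{I2as} together with Cauchy's integral formula, exactly as in the argument yielding Proposition~\ref{h2Prop}; this is why~\eqref{I2as} controls derivatives of $I_2$ in $t$ as well as $I_2$ itself. Next, commute the plane wave $\exp(iu_2(t_1+t_2))$ through the PDO using $\partial_{t_j}\exp(iu_2(t_1+t_2))f = \exp(iu_2(t_1+t_2))(\partial_{t_j}+iu_2)f$, which turns the operator into $:\hat{\Sigma}_k^{(2)}(\cL+\cE+u_2\mathbf{1}_2)(t):$. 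Apply the finite expansion~\eqref{Sexp} to write this as a linear combination of the PDOs $:\hat{\Sigma}_{k-l}^{(2)}(\cL+\cE)(t):$ with coefficients $u_2^l\binom{2-k+l}{l}$, and then invoke the kernel identity~\eqref{key2} (for $k=1$) and the eigenfunction identity~\eqref{key1} (which kills the $l=0$ term when $k=2$), transforming the action on $\cK_2^\sharp(t,s)$ into an $s$-derivative on the kernel.

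The remaining step is to transfer these $s$-derivatives onto the factor $F_1(s,u_1-u_2)$ by integration by parts, as alluded to in the formal self-adjointness discussion below~\eqref{Sexp}. Since $N-1=1$ here, $W_1\equiv 1$ and the PDO $:\hat{\Sigma}_1^{(1)}(\cL+\cE)(-s):$ reduces to a first-order ODO in $s$ with constant coefficients, so integration by parts is entirely elementary. The boundary contributions at $|\re s|\to\infty$ vanish thanks to the exponential decay recorded in~\eqref{I2as}: $\cK_2^\sharp(\lambda;t,s)=O(\exp(-2\re\lambda|\re s|))$ while $|F_1(s,u_1-u_2)|=1$ for real $u$. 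After integration by parts, the eigenfunction identity $-i\partial_s F_1(s,u_1-u_2)=(u_1-u_2)F_1(s,u_1-u_2)$ from~\eqref{1eig} converts the derivative into the scalar $S_{k-l}^{(1)}(u_1-u_2)$ (with $S_0^{(1)}=1$ and $S_1^{(1)}(a)=a$). Substituting back into the sum and recognizing the result via the recurrence~\eqref{Srec} with $M=2$ yields $S_k^{(2)}(u)F_2(t,u)$, which is precisely~\eqref{e2}.

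The main technical point is the justification of the integration by parts, together with the identification of the transferred operator: one must check that $-i\partial_s$ acting on $F_1(s,u_1-u_2)$ is indeed what one obtains after transposing $:\hat{\Sigma}_1^{(1)}(\cL+\cE)(-s):$ through the integral, keeping in mind the sign convention in the $-s$ argument and the absence of complex conjugation in~\eqref{FN}. In the $N=2$ case this reduces to verifying a single sign: the $-s$-evaluation of the constant-coefficient operator $-i\partial$ becomes $i\partial_s$, which after integration by parts against the kernel produces $-i\partial_s$ acting on $F_1$, exactly matching the $s$-argument prescription used in the formal calculation of Section~\ref{Sec2}. Once this sign bookkeeping and the elementary decay estimates are in place, everything else is a direct substitution, and the same template will extend, with heavier analytic input, to the inductive steps treated in Sections~5 and~6.
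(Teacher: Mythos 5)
Your proposal is correct and follows essentially the same route as the paper: shift the plane wave through the PDO, expand via \eqref{Sexp}, invoke \eqref{key1}--\eqref{key2}, integrate by parts against $F_1$ (with the same sign bookkeeping for the $-s$ argument), apply \eqref{1eig}, and finish with the recurrence \eqref{Srec}. The only difference is cosmetic: the paper first reduces to real $t$ and $\lambda>0$ and then extends \eqref{e2} to all of $\Lambda_0\times\cA_2$ by the analyticity of Proposition \ref{h2Prop}, whereas you argue directly on the full domain, which would require working on the shifted contour $\R+i\eta$ rather than literally on \eqref{F2}.
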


\begin{proof}
By virtue of the analyticity features obtained above, we need only show this for real~$t$ and $\lambda>0$ (say).  Following the argument leading from~\eqref{Sexp} to~\eqref{efk}, we deduce that it remains to show
\be\label{intEq}
\int_{\R}ds F_1(s,u_1-u_2):\hat{\Sigma}_k^{(2)}(\cL+\cE)(t):\cK_2^\sharp(t,s)= S_k^{(1)}(u_1-u_2)\int_{\R}ds I_2(t,u,s),\quad k=1,2.
\ee

For $k=2$, the eigenfunction identity \eqref{key1} implies that the integral vanishes. On the other hand, setting $k=1$ and making use of the kernel identity \eqref{key2}, we obtain
\be
\int_{\R}ds F_1(s,u_1-u_2):\hat{\Sigma}_1^{(1)}(\cL+\cE)(-s):\cK_2^\sharp(t,s).
\ee
Now the operator $:\hat{\Sigma}_1^{(1)}(\cL+\cE)(-s):$ reduces to $id/ds$, so we can transfer its action to the factor $F_1$ via integration by parts. (Note that the argument $-s$ is then replaced by $s$, since there is no complex conjugation involved.) Finally, we invoke the eigenvalue equation \eqref{1eig}, and arrive at the rhs of \eqref{intEq}.
\end{proof}

Next, we detail the relation between $F_2$ and the so-called conical (or Mehler) function, see also Section~4.2 in~\cite{Rui11}. We recall that the latter function can be defined by
\be
P_{ik-1/2}^{1/2-\lambda}(\cosh r)\equiv \frac{(\sinh r)^{\lambda-1/2}}{2^{\lambda-1/2}\Gamma(\lambda+1/2)}{}_2F_1(\lambda-ik,\lambda+ik;\lambda+1/2;(1-\cosh r)/2),
\ee
cf.~Eq.~14.3.15~in \cite{Dig10}. Performing the substitution $s\to s+(t_1+t_2)/2$ in the integral \eqref{F2}, we obtain
\be\label{F2s}
F_2(\lambda;(t_1,t_2),(u_1,u_2)) = \exp(i(t_1+t_2)(u_1+u_2)/2)F(\lambda;t_1-t_2,u_1-u_2),
\ee
with
\be\label{F2r}
F(\lambda;z,w)\equiv \int_{\R}ds\frac{\exp(isw)}{\prod_{\de=+,-}[2\cosh(s+\de z/2)]^\lambda}.
\ee
From the latter equation it is plain that $F$ is even in both $z$ and $w$, so that $F_2$ is invariant under the interchanges $t_1\leftrightarrow t_2$ and $u_1\leftrightarrow u_2$. Note that the former invariance property is clear from the defining formula~\eqref{F2}, whereas the latter is not immediate from~\eqref{F2}. Moreover, comparing \eqref{F2r} to Eq.~14.12.4 in \cite{Dig10}, we find
\be
F(\lambda;z,w) = \left(\frac{\pi}{4}\right)^{1/2}\frac{\Gamma(\lambda+iw/2)\Gamma(\lambda-iw/2)}{\Gamma(\lambda)(2\sinh z)^{\lambda-1/2}}P_{iw/2-1/2}^{1/2-\lambda}(\cosh z).
\ee

We proceed to establish a bound on the $2$-particle eigenfunction $F_2$, which exhibits its exponential decay as $|\re(t_1-t_2)|\to\infty$. This bound will be one of the key ingredients for achieving analytic control of the second step in the recursive scheme.

\begin{proposition}\label{F2bProp}
Let $u\in\R^2$. For any $(\lambda,t)\in \Lambda_0\times \cA_2$, we have
\be\label{F2b}
|F_2(\lambda;t,u)| < C(\lambda,|\im(t_1-t_2)|)\exp(-\im(t_1+t_2)(u_1+u_2)/2)\frac{\re(t_1-t_2)}{\sinh\big(\re \lambda\,\re(t_1-t_2)\big)},
\ee
where $C$ is continuous on $\Lambda_0\times [0,\pi)$.
\end{proposition}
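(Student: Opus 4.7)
The plan is to use the center-of-mass decomposition \eqref{F2s}--\eqref{F2r} to peel off the oscillatory prefactor, then bound the remaining one-dimensional integral by passing to absolute values inside the integrand, using the elementary inequality $|\cosh(a+ib)|\ge|\cos b|\cosh a$ to reduce to a real integrand, and finally evaluating (up to constants) the resulting real integral.

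To begin, I would take moduli in \eqref{F2s}. Since $u_1+u_2\in\R$, this gives
\[
|F_2(\lambda;t,u)| = \exp\!\bigl(-\im(t_1+t_2)(u_1+u_2)/2\bigr)\,|F(\lambda;z,w)|,
\]
with $z\equiv t_1-t_2\in\C$, $|\im z|<\pi$, and $w\equiv u_1-u_2\in\R$. It thus suffices to prove
\[
|F(\lambda;z,w)|\le C(\lambda,|\im z|)\,\frac{\re z}{\sinh\bigl(\re\lambda\,\re z\bigr)}
\]
with $C$ continuous on $\Lambda_0\times[0,\pi)$; the right-hand side is even in $\re z$, being the ratio of two odd functions.

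Next, writing $z=x+iy$, the contour $\R$ in \eqref{F2r} is admissible throughout $\cA_2$, because the singularities of the integrand in the $s$-plane lie at $\pm z/2+i\pi(2n+1)/2$, $n\in\Z$, which stay strictly off the real axis when $|y|<\pi$. Putting absolute values inside the integral, using $|e^{isw}|=1$, the identity $|\cosh(a+ib)|^2=\cosh^2 a\,\cos^2 b+\sinh^2 a\,\sin^2 b$ (which trivially yields $|\cosh(a+ib)|\ge|\cos b|\cosh a$), and the bound $|\zeta^{-\lambda}|\le e^{\pi|\im\lambda|}|\zeta|^{-\re\lambda}$ for the principal branch, I arrive at
\[
|F(\lambda;z,w)|\le e^{2\pi|\im\lambda|}|\cos(y/2)|^{-2\re\lambda}\,F(\re\lambda;x,0),
\]
where $F(\alpha;x,0)\equiv\int_\R\prod_{\delta=\pm}[2\cosh(s+\delta x/2)]^{-\alpha}\,ds$ is purely real. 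The prefactor $e^{2\pi|\im\lambda|}|\cos(y/2)|^{-2\re\lambda}$ is continuous on $\Lambda_0\times[0,\pi)$ and blows up exactly at $|y|=\pi$, as one would expect from pinching of the contour by the singularities.

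Finally, with $\alpha=\re\lambda>0$ and $x\ge 0$ (by evenness), the inequality $2\cosh u\ge e^{|u|}$ and splitting the integration range at $s=\pm x/2$ give the closed-form bound
\[
F(\alpha;x,0)\le\int_\R e^{-\alpha(|s+x/2|+|s-x/2|)}\,ds=\Bigl(x+\tfrac1\alpha\Bigr)e^{-\alpha x}.
\]
The last step is the sharper comparison $(x+1/\alpha)e^{-\alpha x}\le x/\sinh(\alpha x)$, which after setting $y=\alpha x\ge 0$ becomes $(y+1)(1-e^{-2y})\le 2y$; this follows by observing that $g(y)\equiv 2y-(y+1)(1-e^{-2y})$ satisfies $g(0)=g'(0)=0$ and $g''(y)=4y\,e^{-2y}\ge 0$. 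Combining the three displays proves the stated bound with $C(\lambda,y)=e^{2\pi|\im\lambda|}|\cos(y/2)|^{-2\re\lambda}$. The main subtlety is delivering exactly the form $\re z/\sinh(\re\lambda\,\re z)$ rather than a mere exponential decay bound; this hinges on the elementary convexity argument for $g$ in the final step.
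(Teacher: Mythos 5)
Your argument is correct, and its first half coincides with the paper's: both proofs reduce via the center-of-mass identity \eqref{F2s}--\eqref{F2r} to bounding $F(\lambda;z,w)$ on the strip $|\im z|<\pi$, and both then take moduli inside the $s$-integral, separating off a factor continuous in $(\lambda,|\im z|)$. The difference lies in how the remaining real integral is handled. The paper uses the inequality \eqref{cIneq} to replace $|2\cosh(s+\de z/2)|^{\re\lambda}$ by $2\cosh\big(\re\lambda\,(s+\de\re z/2)\big)$ and then evaluates the resulting integral \emph{exactly} by a residue calculation, obtaining $\re z/2\sinh(\re\lambda\,\re z)$ on the nose; you instead keep the integrand $\prod_{\de}[2\cosh(s+\de \re z/2)]^{-\re\lambda}$, bound it elementarily via $2\cosh v\ge e^{|v|}$ to get $(x+1/\alpha)e^{-\alpha x}$ with $x=\re z\ge0$, $\alpha=\re\lambda$, and then prove the comparison $(x+1/\alpha)e^{-\alpha x}\le x/\sinh(\alpha x)$ by your convexity argument for $g(y)=2y-(y+1)(1-e^{-2y})$ (which is correct: $g(0)=g'(0)=0$, $g''(y)=4ye^{-2y}\ge0$). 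Your route is more elementary (no contour integration, no \eqref{cIneq}) and yields an explicit constant $C(\lambda,y)=e^{2\pi|\im\lambda|}\cos(y/2)^{-2\re\lambda}$; the paper's route has the advantage that the same reduction-plus-exact-evaluation template (via \eqref{cIneq} and the integrals $B_{N-1}$, $C_{N-1}$ of \cite{HR13}) is what carries the induction in Sections 5 and 6, where a closed-form evaluation of the multidimensional analogue is genuinely needed and your one-dimensional shortcut would not suffice. Two cosmetic points: your chain gives ``$\le$'' rather than the stated strict ``$<$'' (harmless, e.g.\ since $2\cosh v>e^{|v|}$ for $v\ne0$, or by doubling $C$), and at $\re z=0$ the bound is to be read as the limit $1/\re\lambda$, exactly as in the paper.
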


\begin{proof}
In view of \eqref{F2s}, this amounts to a bound on the function $F(\lambda;z,w)$. Clearly, its representation \eqref{F2r} can be continued analytically to the strip $|\im z|<\pi$, and for $z$ in this strip we have
\be\label{Fb}
|F(\lambda;z,w)| \le  C(\lambda,|\im z|) \int_{\R}\frac{ds}{\prod_{\de=+,-}|2\cosh(s+\de z/2)|^{\re \lambda}},
\ee
where $C$ is a continuous  function on $ \Lambda_0\times [0,\pi)$.
In order to bound the rhs we note the inequality
\be\label{cIneq}
\frac{2\cosh (a\,\re v)}{|2\cosh v|^a}< C(a,|\im v|) , \ \ a >0,\ \ |\im v|<\pi/2,\ \ \ \forall\, \re v\in\R,
\ee
where $C$ is continuous on $(0,\infty)\times [0,\pi/2) $. 
Combining \eqref{Fb} and \eqref{cIneq}, we deduce
\be\label{Fbf}
|F(\lambda;z,w)| < C(\lambda,|\im z|)\int_{\R}\frac{ds}{\prod_{\de=+,-}2\cosh\big(\re \lambda\,(s+\de \re z/2)\big)},
\ee
with $C$ continuous on $ \Lambda_0\times [0,\pi)$. By a residue calculation it is straightforward to verify that the integral is given by $\re z/2\sinh(\re \lambda\, \re z)$. Clearly, this implies \eqref{F2b}.
\end{proof}

We have thus far viewed $u$ as a fixed vector in $\R^2$. 
From~\eqref{F2s}--\eqref{F2r} and the bound~\eqref{chest}, however, it easily follows that $F_2$ can be analytically continued to any $u\in\C^2$ with $|\im (u_1-u_2)|<2\re \lambda$. We proceed to estimate $F_2$ in the corresponding holomorphy domain.
 
\begin{proposition}
The function~$F_2(\lambda;t,u)$ is holomorphic in
\be\label{cD2}
\cD_2\equiv \{ (\lambda,t,u)\in \Lambda_0\times \cA_2\times \C^2 \mid  |\im (u_1-u_2)|<2\re \lambda \},
\ee
 and for
  $\im(u_1-u_2)\ne 0$ we have
\bea\label{F2bu}
|F_2(\lambda;t,u)| & <  &C(\lambda,|\im (t_1-t_2)|)\exp(- \im [(t_1+t_2)(u_1+u_2)]/2)
\nonumber \\
&  & \times \frac{\sinh( \im(u_2-u_1)\re(t_1-t_2)/2)}{\sin (\pi\im(u_2-u_1)/2\re \lambda)\sinh(\re \lambda\,\re(t_1-t_2))},
\eea
where   $C$ is continuous on $\Lambda_0\times[0,\pi)$. 
\end{proposition}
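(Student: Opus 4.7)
The strategy extends the argument used for Propositions~\ref{h2Prop} and~\ref{F2bProp} to complex~$u$. I would first establish holomorphy on $\cD_2$ by exhibiting an integrable majorant for the integrand of~\eqref{F2r}, then reduce the bound to a one-dimensional integral via the same estimates as in Proposition~\ref{F2bProp}, and finally evaluate that integral by a rectangular-contour residue calculation whose periodicity produces the $\sin$ factor in the denominator.

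\textbf{Step 1: holomorphy on $\cD_2$.} Working from the reduced form~\eqref{F2s}--\eqref{F2r}, I would note that for real $s$ one has $|\exp(isw)|=\exp(-s\,\im w)$, so the estimate~\eqref{chest} applied to each hyperbolic cosine gives a bound of the form $C(\lambda,|\im z|)\exp(-s\,\im w-2\re\lambda\,|s|)$ on the integrand. Whenever $|\im w|<2\re\lambda$ this is exponentially decaying at infinity, uniformly on compact subsets of $\{\re\lambda>0,\,|\im z|<\pi,\,|\im w|<2\re\lambda\}$. Holomorphy of $F(\lambda;z,w)$ then follows by differentiation under the integral sign (or Morera), and multiplying by the prefactor in~\eqref{F2s} (which is entire in $(t,u)$) delivers holomorphy of $F_2$ on~$\cD_2$, since $\cA_2$ corresponds precisely to $|\im z|<\pi$.

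\textbf{Step 2: reduction to a one-dimensional integral.} From~\eqref{F2s} I would write
\[
|F_2(\lambda;t,u)|=\exp\!\big(-\im[(t_1+t_2)(u_1+u_2)]/2\big)\,|F(\lambda;z,w)|,\qquad z=t_1-t_2,\ w=u_1-u_2,
\]
so the task reduces to bounding $|F(\lambda;z,w)|$. Inside~\eqref{F2r} I would apply~\eqref{cIneq} to each factor $|2\cosh(s\pm z/2)|^{-\re\lambda}$, as in the proof of Proposition~\ref{F2bProp}, absorbing the resulting $(\lambda,|\im z|)$-dependent factors into a continuous constant~$C(\lambda,|\im z|)$. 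What remains is to evaluate
\[
I(\alpha,r,\beta)\equiv\int_{\R}\frac{e^{-\beta s}\,ds}{4\cosh(\alpha(s+r/2))\cosh(\alpha(s-r/2))},\qquad \alpha=\re\lambda,\ r=\re z,\ \beta=\im w,
\]
for $|\beta|<2\alpha$.

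\textbf{Step 3: rectangular contour evaluation.} I would compute $I(\alpha,r,\beta)$ via the rectangular contour with vertices $\pm R,\,\pm R+i\pi/\alpha$. The identity $\cosh(\alpha(s+i\pi/\alpha))=-\cosh(\alpha s)$ leaves the denominator invariant under vertical translation by $i\pi/\alpha$, while the numerator acquires the factor $e^{-i\pi\beta/\alpha}$; the vertical sides decay to zero at infinity precisely because $|\beta|<2\alpha$. The enclosed simple poles sit at $s=\pm r/2+i\pi/(2\alpha)$, with residues computable using $\cosh(\pm\alpha r+i\pi/2)=\pm i\sinh(\alpha r)$; their sum is proportional to $\sinh(\beta r/2)/\sinh(\alpha r)$. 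Dividing by $1-e^{-i\pi\beta/\alpha}=2ie^{-i\pi\beta/(2\alpha)}\sin(\pi\beta/(2\alpha))$ yields
\[
I(\alpha,r,\beta)=\frac{\pi\,\sinh(\beta r/2)}{2\alpha\,\sinh(\alpha r)\,\sin(\pi\beta/(2\alpha))}.
\]
Since $\sinh$ and $\sin$ are both odd, the joint sign-flip $\beta\to -\beta$ in numerator and denominator is irrelevant, so reinstating the prefactor from Step~2 produces precisely the right-hand side of~\eqref{F2bu}, with the $\pi/(2\alpha)$ absorbed into $C(\lambda,|\im(t_1-t_2)|)$.

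\textbf{Main obstacle.} Steps~1 and~2 are routine adaptations of the techniques already used in this section; the only genuinely new ingredient is the contour evaluation in Step~3. The delicate point there is to keep track of the factor $1-e^{-i\pi\beta/\alpha}$ coming from the top of the rectangle, as it is responsible for the $\sin(\pi\im(u_2-u_1)/2\re\lambda)$ in the denominator and simultaneously explains why the bound blows up at the boundary $|\im(u_1-u_2)|=2\re\lambda$ of~$\cD_2$. Once this is done cleanly, the holomorphy claim is immediate from Step~1, and the stated bound follows from the explicit formula for $I(\alpha,r,\beta)$.
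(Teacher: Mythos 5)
Your proposal is correct and follows essentially the same route as the paper: reduce via \eqref{F2s} to a bound on $F(\lambda;z,w)$, majorize the two cosh factors with \eqref{cIneq}, and evaluate the resulting one-dimensional integral (the paper's \eqref{Fbfu}) by a residue calculation, with holomorphy on $\cD_2$ coming from the locally uniform integrable majorant furnished by \eqref{chest}. The only difference is that you carry out explicitly the rectangular-contour computation that the paper labels ``a straightforward residue calculation,'' and your evaluation agrees with the paper's closed form.
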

\begin{proof}
Just as in the previous proposition, this amounts to a bound on the function $F(\lambda;z,w)$. From~\eqref{F2r} we obtain as a generalization of~\eqref{Fb}
\be\label{Fb2}
|F(\lambda;z,w)| \le C(\lambda,|\im z|)  \int_{\R}ds\,\frac{\exp(-s\im w)}{\prod_{\de=+,-}|2\cosh(s+\de z/2)|^{\re \lambda}}.
\ee
 The counterpart of
\eqref{Fbf} is then
 \be\label{Fbfu}
|F(\lambda;z,w)| < C(\lambda,|\im z|)\int_{\R}ds\,\frac{\exp(-s\im w)}{\prod_{\de=+,-}2\cosh\big(\re \lambda\,(s+\de \re z/2)\big)}.
\ee
A straightforward residue calculation now yields
\be
\int_{\R}ds\,\frac{\exp(-s\im w)}{\prod_{\de=+,-}\cosh\big(a\,(s+\de \re z/2)\big)}=
\frac{2\pi \sinh(\im w\re z/2)}{a\sin (\pi\im w/2a)\sinh(a\,\re z)}, 
 \ee
where $a>0, |\im w|\in (0,2a)$,  and so~\eqref{F2bu} follows.
\end{proof}

Prop.~4.3 can be obtained from Prop.~4.4 by letting $\im (u_1-u_2)$ converge to 0, but a separate treatment of real and complex~$u$ is expedient for later purposes.   Specifically, Prop.~4.3 is used in the next section to obtain the $N=3$ counterparts of Props.~4.1 and 4.2, dealing with the joint eigenfunction properties for real $u\in\R^3$, and then we obtain the $N=3$ analogs of Props.~4.3 and 4.4. Finally, this flow chart can be used for the inductive step taken in Section~6. (Of course, once the joint eigenvalue equations are proved for real~$u$, they continue to the pertinent complex~$u$. Indeed, the eigenvalues~$S_k(u)$ are entire functions of~$u$.)

\section{The step from $N=2$ to $N=3$}\label{Sec4}
In this section we consider the analytic aspects arising in the step $N=2\to N=3$ of the recursive scheme. As far as possible, we shall follow the discussion in Section \ref{Sec3}.  

We recall that the integrand in question is given by (cf.~\eqref{FN})
\be
I_3(\lambda;t,u,s)\equiv W_2(\lambda;s)\cK_3^\sharp(\lambda;t,s)F_2(\lambda;s,(u_1-u_3,u_2-u_3)).
\ee
From Prop.~\ref{F2bProp} we can deduce a suitable bound on the factor $F_2$. We shall consider $s\in\C^2$ with $\im(s_1-s_2)=0$, anticipating simultaneous shifts of the integration contours. Letting $u\in\R^3$, we obtain from \eqref{F2b}
\be\label{F2b2}
\begin{split}
|F_2(\lambda;s,(u_1-u_3,u_2-u_3))| &< C(\lambda)\exp(-\im(s_1+s_2)(u_1+u_2-2u_3)/2)\\ &\quad \times \frac{s_1-s_2}{\sinh(\re \lambda(s_1-s_2))},\quad \im s_1=\im s_2,
\end{split}
\ee
with $C$ continuous on $\Lambda_0$. Recalling the definition~\eqref{cWW} of~$W_2$, we easily deduce the estimate
\begin{multline}\label{W2F2}
|W_2(\lambda;s)F_2(\lambda;s,(u_1-u_3,u_2-u_3))| < C(\lambda)\exp(-c(u_1+u_2-2u_3))\\ 
\quad \times \prod_{j=1}^2(1+|\re s_j|)\exp(\re \lambda |\re s_j|),\quad c:=\im s_1=\im s_2.
\end{multline}
Finally, taking into account the factor~$\cK_3^\sharp$~\eqref{cKs}, we deduce from the bound~\eqref{chest} that we have
\begin{multline}\label{I3b}
|I_3(\lambda;t,u,s)|<C(\lambda,\re t,|\im t_1-c|,|\im t_2-c|,|\im t_3-c|)\\
\times\prod_{j=1}^2(1+|\re s_j|)\exp(-2\re \lambda |\re s_j|), \ \ \ u\in\R^3,
\end{multline}
with $C$ continuous on $\Lambda_0\times\R^3\times [0,\pi/2)^3$. 

To begin with, we assume $t\in\R^3$. This entails that the contour $\R$ in the $s_k$-plane stays away from the singularities at
\be\label{I3sing}
s_k = t_j \pm \frac{i\pi}{2}(2n+1),\quad k=1,2,\quad j=1,2,3,\quad n\in\N,
\ee
so that the function
\be\label{F3}
F_3(\lambda;t,u)\equiv \frac{1}{2}\exp(iu_3(t_1+t_2+t_3))\int_{\R^2}ds I_3(\lambda;t,u,s),\quad  \lambda>0,\quad t,u\in\R^3,
\ee
is well defined.

The uniform bound~\eqref{I3b}  implies that   $F_3$  extends to a holomorphic function of~$\lambda$ in~$\Lambda_0$ and of~$t$ for~$|\im t_j|<\pi/2$, $j=1,2,3$. Moreover, we can shift the two contours $\R$ simultaneously as long as we do not meet any of the singularities \eqref{I3sing}. Using the same iterative procedure as in the $N=2$ case, we can thus extend the holomorphy domain step by step. To detail this, we introduce the domain
\be\label{cA3}
\cA_3\equiv \{t\in\C^3 \mid   \max_{1\leq j<k\leq 3}|\im(t_j-t_k)|<\pi\}.
\ee
 Also, given $t\in\C^3$, we set
\be\label{phi}
\phi(t)\equiv \im(t_{j_1}+t_{j_3})/2,
\ee
where the indices $j_1$ and $j_3$ are determined by the requirement
\be\label{imord}
\im t_{j_3}\leq \im t_{j_2}\leq \im t_{j_1},\quad \{j_1,j_2,j_3\} = \{1,2,3\}.
\ee
Then we have the following counterpart of Prop.~\ref{h2Prop}.

\begin{proposition}\label{h3Prop}
Let $u\in\R^3$. Then $F_3(\lambda;t,u)$ is holomorphic in $\Lambda_0\times \cA_3$.
\end{proposition}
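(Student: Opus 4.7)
The plan is to mirror the proof of Proposition \ref{h2Prop}, but now with a two-dimensional integration contour. I would fix an arbitrary basepoint $(\lambda_0, t_0) \in \Lambda_0 \times \cA_3$ and set $c \equiv \phi(t_0)$ once and for all. By \eqref{phi}--\eqref{imord}, $c$ is the midpoint of the extremal imaginary parts $\im t_{j_3}$ and $\im t_{j_1}$, so the $\cA_3$ condition gives $|\im(t_0)_{j_1} - c| = |\im(t_0)_{j_3} - c| = \im((t_0)_{j_1} - (t_0)_{j_3})/2 < \pi/2$, and sandwiching handles $j_2$. By continuity, all three strict inequalities $|\im t_j - c| < \pi/2$ persist on some open neighborhood $U$ of $(\lambda_0, t_0)$ in $\Lambda_0 \times \C^3$.

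Next I would define the shifted-contour integral
\be
\tilde F_3(\lambda, t) \equiv \tfrac{1}{2}\exp\big(iu_3(t_1+t_2+t_3)\big) \int_{(\R + ic)^2} I_3(\lambda; t, u, s_1, s_2)\, ds_1\, ds_2.
\ee
The uniform bound \eqref{I3b}, whose constant is continuous on $[0,\pi/2)^3$ in the arguments $|\im t_j - c|$, applies throughout $U$ and furnishes an integrable majorant of the form $\prod_{j=1,2}(1+|x_j|)\exp(-2\re\lambda |x_j|)$ that is uniform over compact subsets. Standard differentiation under the integral sign (or a Morera plus Fubini argument) then shows that $\tilde F_3$ is holomorphic on $U$, and covering $\Lambda_0 \times \cA_3$ with such neighborhoods yields a holomorphic function on the entire domain.

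It remains to identify $\tilde F_3$ with the analytic continuation of the $F_3$ defined in \eqref{F3}. For this I would invoke Cauchy's theorem: for fixed $(\lambda, t)$ and any $c_1, c_2$ with $\max_j \im t_j - \pi/2 < c_1 < c_2 < \min_j \im t_j + \pi/2$, the integrand $I_3$ is jointly holomorphic in $(s_1,s_2)$ on $\R\times[c_1,c_2]$ in each slot, since no pole $s_k = t_j + i\pi(2n+1)/2$ is crossed, the argument of $F_2$ stays in $\cA_2$ (as $\im(s_1-s_2) = 0$), and $W_2$ is regular on this set. The exponential decay from \eqref{I3b} kills the vertical sides of the bounding rectangles as $|\re s_k| \to \infty$, so the integral is independent of the common shift. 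For $t$ with $|\im t_j| < \pi/2$ the value $0$ lies in the admissible $c$-interval, whence $\tilde F_3$ agrees with the original $F_3$ on that open set, and by uniqueness of analytic continuation $\tilde F_3$ extends $F_3$ holomorphically to all of $\Lambda_0 \times \cA_3$.

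The main obstacle I anticipate is bookkeeping rather than conceptual: I must verify that perturbing $(\lambda, t)$ within $U$ does not drift any singularity (from $\cK_3^\sharp$ or $F_2$) onto the fixed contour $(\R+ic)^2$. The three inequalities $|\im t_j - c| < \pi/2$ handle the $\cK_3^\sharp$ poles, while the $F_2$-singularities $\im(s_1-s_2)=\pm\pi$ are automatically avoided on $(\R+ic)^2$; both conditions are open, which is exactly what forces the strict strip width $\pi$ in the definition of $\cA_3$.
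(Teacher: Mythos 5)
Your proposal is correct and follows essentially the same route as the paper: simultaneous shifts of both $s$-contours to a common height near $\phi(t)$ (kept strictly between the upward/downward singularity sequences by the $\cA_3$ condition), with the uniform bound \eqref{I3b} supplying the integrable majorant that yields holomorphy and justifies identifying the shifted representation with \eqref{F3}, exactly as in the paper's \eqref{F3phi}. The one detail to keep explicit is that the two contours must be moved together (a common shift, so that $s_1-s_2$ stays real and $W_2(\lambda;s)$ stays on the positive branch); shifting one slot at a time, as a literal rectangle argument suggests, runs into the branch point of $[4\sinh^2(s_1-s_2)]^{\lambda}$ at $s_1=s_2$, but with the simultaneous shift your bookkeeping reproduces the paper's proof.
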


\begin{proof}
We fix $(\lambda,t)\in \Lambda_0\times \cA_3$, and note
\be
\im t_j\pm \pi/2\gtrless \phi(t),\quad j=1,2,3.
\ee
Hence the contour $\R+i\phi(t)$ stays below/above the upward/downward sequences of singularities \eqref{I3sing}. This entails we can extend $F_3$ holomorphically to any such $t$-value by simultaneous contour shifts, without meeting any singularities. More specifically, in this way we arrive at the representation
\be\label{F3phi}
F_3(\lambda;t,u) =\frac{1}{2} \exp(iu_3(t_1+t_2+t_3))\int_{(\R+i\phi(t))^2}ds I_3(\lambda;t,u,s), \quad (\lambda,t)\in \Lambda_0\times \cA_3,\quad u\in\R^3,
\ee
with the uniform bound~\eqref{I3b} ensuring holomorphy in $\Lambda_0\times \cA_3$.
\end{proof}

We proceed to show that $F_3$ is a joint eigenfunction of the pertinent PDOs.

\begin{proposition}\label{e3Prop}
Let $u\in\R^3$.  For all $(\lambda,t)\in \Lambda_0\times \cA_3$, we have the joint eigenfunction property
\be
:\hat{\Sigma}_k^{(3)}(\cL+\cE)(t):F_3(t,u) = S_k^{(3)}(u)F_3(t,u),\quad k=1,2,3.
\ee
\end{proposition}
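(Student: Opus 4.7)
The plan is to follow the template of Prop.~\ref{e2Prop} while handling the extra complications created by the two-dimensional $s$-integration and the singular coefficients of the two-particle PDOs. Since the eigenvalues $S_k^{(3)}(u)$ are entire in $u$ and both sides of the asserted identity are holomorphic in $(\lambda,t)\in\Lambda_0\times\cA_3$ by Prop.~\ref{h3Prop}, it suffices to prove the equations for real $\lambda>0$, $t\in\R^3$, and $u\in\R^3$.

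Differentiating under the integral sign is justified by the uniform bound~\eqref{I3b}: Cauchy's integral formula on a slightly narrower polystrip in $\im t$ yields analogous estimates for the $t$-partials. I would then shift the plane wave $\exp(iu_3(t_1+t_2+t_3))$ through the PDO, converting $:\hat{\Sigma}_k^{(3)}(\cL+\cE)(t):$ into $:\hat{\Sigma}_k^{(3)}(\cL+\cE+u_3\mathbf{1}_3)(t):$, and apply the expansion~\eqref{Sexp} to reduce the problem to evaluating $:\hat{\Sigma}_{k-l}^{(3)}(\cL+\cE)(t):$ acting on the inner integral for $l=0,\ldots,k$. The eigenfunction identity~\eqref{key1} annihilates the summand with $k-l=3$, and the kernel identities~\eqref{key2} replace the action on $\cK_3^\sharp(t,s)$ by the action of $:\hat{\Sigma}_{k-l}^{(2)}(\cL+\cE)(-s):$ under the integral sign.

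The analytic heart of the argument is the transfer of this $s$-differential-operator action from $\cK_3^\sharp$ onto $W_2(s)F_2(s,u_1-u_3,u_2-u_3)$ by integration by parts, turning it into $:\hat{\Sigma}_{k-l}^{(2)}(\cL+\cE)(s):$ acting on $F_2$ (the sign change in the argument mirroring the one-dimensional phenomenon in Prop.~\ref{e2Prop}). This reflects the formal self-adjointness of the PDOs $:\hat{\Sigma}_m^{(2)}(\cL+\cE):$ on $L^2(\R^2,W_2(s)\,ds)$, equivalently the self-adjointness of the similarity-transformed $\cH_m$~\eqref{sim2} on $L^2(\R^2,ds)$. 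Once the operator has been moved onto $F_2$, Prop.~\ref{e2Prop} produces the eigenvalue $S_{k-l}^{(2)}(u_1-u_3,u_2-u_3)$, and the combinatorial recurrence~\eqref{Srec} then collapses the $l$-sum to $S_k^{(3)}(u)F_3(t,u)$.

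The main obstacle I foresee is the rigorous justification of this integration by parts: boundary terms at $|\re s_j|\to\infty$ must vanish, and the integrand must be handled across the collision diagonal $s_1=s_2$, where the normal-ordered coefficients carry $\coth(s_1-s_2)$ and $1/\sinh(s_1-s_2)$ singularities. The cleanest route is to work in the $\cH_m$-picture: the conjugated operator retains only the locally integrable Calogero-type singularity $\lambda(1-\lambda)/\sinh^2(s_1-s_2)$, while the dressings $W_2^{1/2}\cK_3^\sharp$ and $W_2^{1/2}F_2$ both vanish at the diagonal like $|s_1-s_2|^{\lambda}$ (the latter being smooth across $s_1=s_2$ by the representation~\eqref{F2s}--\eqref{F2r}), so that a standard cutoff argument removing a shrinking tubular neighborhood of $\{s_1=s_2\}$ legitimates the self-adjointness manipulation. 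Vanishing of the boundary contributions at $|\re s_j|\to\infty$ is then controlled by the exponential estimate~\eqref{W2F2} for $W_2F_2$ combined with the faster decay of $\cK_3^\sharp$ supplied by~\eqref{chest}, completing the proof.
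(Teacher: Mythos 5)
Your proposal is correct and follows essentially the same route as the paper: reduce by holomorphy (Prop.~\ref{h3Prop}) to real $t,u$ and real positive coupling, shift the plane wave through the PDO, expand via \eqref{Sexp}, apply \eqref{key1}--\eqref{key2}, transfer the two-variable PDOs onto $W_2F_2$ in the $\cH_k^{(2)}$-picture \eqref{sim2} by integration by parts, and conclude with Prop.~\ref{e2Prop} and the recurrence \eqref{Srec}. The only divergence is the treatment of the diagonal $s_1=s_2$: the paper uses the freedom in the analyticity reduction to take $\lambda>2$, so that $W_2(s)^{1/2}\in C^2(\R^2)$ and the parts-integration needs no cutoff, whereas your cutoff argument aimed at all $\lambda>0$ is more delicate than stated (in particular $\lambda(1-\lambda)/\sinh^2(s_1-s_2)$ is \emph{not} locally integrable in one variable --- only the vanishing of the dressed factors at the diagonal controls the potential term, and for small $\re\lambda$ the tube boundary terms are not obviously negligible); this is harmless, however, since establishing the identity for $\lambda>2$ already suffices by analytic continuation in $\lambda$.
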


\begin{proof}
In view of Prop.~5.1  it suffices to show this for $t\in\R^3$ and $\lambda >2$ (say).  Reasoning just as in the proof of Prop.~4.2, we deduce that  it remains to show that the integral in \eqref{F3} has the joint eigenfunction property
\begin{multline}\label{intEq2}
\int_{\R^2}ds W_2(s)F_2(s,(u_1-u_3,u_2-u_3)):\hat{\Sigma}_k^{(3)}(\cL+\cE)(t):\cK_3^\sharp(t,s)\\
 = S_k^{(2)}(u_1-u_3,u_2-u_3)\int_{\R^2}ds I_3(t,u,s),\quad k=1,2,3,
\end{multline}
with $S_3^{(2)}\equiv 0$.  
For $k=3$, the integral on the lhs vanishes (cf.~\eqref{key1}), and in the remaining two cases \eqref{key2} yields
\be\label{2int}
\int_{\R^2}ds W_2(s)F_2(s,(u_1-u_3,u_2-u_3)):\hat{\Sigma}_k^{(2)}(\cL+\cE)(-s):\cK_3^\sharp(t,s),\quad k=1,2.
\ee
Using~\eqref{sim2}, we can rewrite this as
\be\label{intcHr}
\int_{\R^2}ds W_2(s)^{1/2}F_2(s,(u_1-u_3,u_2-u_3)) \cH_k^{(2)}(-s) W_2(s)^{1/2}\cK_3^\sharp(t,s),\quad k=1,2.
\ee
Since we choose $\lambda>2$, the function~$W_2(s)^{1/2}$ is in $C^2(\R^2)$, whereas $F_2$ and~$\cK_3^{\sharp}$ are smooth in~$s$. Recalling~\eqref{cHk}, we see that the factor~$W_2(s)^{1/2}$ is differentiated at most twice. Thus we can integrate by parts to get
\be\label{intcHl}
\int_{\R^2}ds W_2(s)^{1/2}\cK_3^\sharp(t,s)\cH_k^{(2)}(s)W_2(s)^{1/2}F_2(s,(u_1-u_3,u_2-u_3)) ,\quad k=1,2.
\ee

The upshot is that \eqref{2int} is given by
\be\label{upsh}
\int_{\R^2}ds W_2(s)\cK_3^\sharp(t,s):\hat{\Sigma}_k^{(2)}(\cL+\cE)(s):F_2(s,(u_1-u_3,u_2-u_3)),\quad k=1,2.
\ee
Thus we need only appeal to the eigenvalue properties \eqref{e2} to arrive at the rhs of \eqref{intEq2}.  
\end{proof}

We continue to establish the $N=3$ analog of Prop.~\ref{F2bProp}. As we shall see, this can be reduced to the calculation of the integral
\be\label{B2}
B_2(w)\equiv \int_{\R^2}dv \frac{(v_1-v_2)\sinh(v_1-v_2)}{\prod_{j=1}^3\prod_{k=1}^2\cosh(w_j-v_k)},
\ee
which occurs in a suitable majorization of $F_3$. We met this integral in our paper~\cite{HR13}, where it played a similar role in bounding joint eigenfunctions. Moreover, we showed that $B_2$ is given by
\be\label{B2eval}
B_2(w) = 4\prod_{1\leq j<k\leq 3}\frac{w_j-w_k}{\sinh(w_j-w_k)},
\ee
cf.~Lemma C.2 in \cite{HR13} for $N=2$.

To formulate the precise result we recall the definition \eqref{phi} of $\phi(t)$ and introduce in addition
\be\label{d}
d(t)\equiv \im(t_{j_1}-t_{j_3}),
\ee
where the indices $j_1$ and $j_3$ are again given by \eqref{imord}.

\begin{proposition}\label{F3bProp}
Let $u\in\R^3$. For any   $(\lambda,t)\in \Lambda_0\times \cA_3$, we have
\be\label{F3b}
\begin{split}
|F_3(\lambda;t,u)| &< C(\lambda,d(t))\exp(-(u_1+u_2)\phi(t)-u_3\im t_{j_2})\\ &\quad \times \prod_{1\leq j<k\leq 3}\frac{\re(t_j-t_k)}{\sinh\big(\re \lambda\,\re(t_j-t_k)\big)},
\end{split}
\ee
where $C$ is continuous on $\Lambda_0\times [0,\pi)$.
\end{proposition}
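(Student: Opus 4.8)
\emph{Strategy.} The plan is to follow the blueprint of Prop.~\ref{F2bProp}, bounding $F_3$ by moving its integration contours to the optimal height $\R+i\phi(t)$ (as in \eqref{F3phi}), taking absolute values inside, and then majorizing the resulting $s$-integral by an elementary integral that has been evaluated already, namely $B_2(w)$ in \eqref{B2}--\eqref{B2eval}. The bound \eqref{F2b2} on the factor $F_2$ inside $I_3$ (valid precisely when the two $s$-contours are shifted by the same amount, which is what \eqref{F3phi} does) is what makes $B_2$ the right target: after the shift $s_k\mapsto s_k+i\phi(t)$ the factor $F_2$ contributes the factor $(s_1-s_2)/\sinh(\re\lambda(s_1-s_2))$ together with an exponential in $\im(s_1+s_2)=2\phi(t)$, the weight $W_2$ contributes $[2\sinh(s_1-s_2)]^{2\lambda}$ (whose modulus, using $|\im(s_1-s_2)|=0$, is $\le C(\lambda)[2\sinh|s_1-s_2|]^{2\re\lambda}$ up to the sign issue handled exactly as in \eqref{cIneq}), and the kernel $\cK_3^\sharp$ contributes $\prod_{j=1}^3\prod_{k=1}^2[2\cosh(t_j-s_k-i\phi(t))]^{-\lambda}$.

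\emph{Key steps, in order.} First I would record the representation \eqref{F3phi} and extract the modulus, writing $s_k=v_k+i\phi(t)$ with $v\in\R^2$; the prefactor $\exp(iu_3\sum t_j)$ produces the factor $\exp(-u_3\im(t_1+t_2+t_3))$, and combining with the exponential from \eqref{F2b2}, which gives $\exp(-(u_1+u_2-2u_3)\phi(t))$, collapses to $\exp(-(u_1+u_2)\phi(t)-u_3\im t_{j_2})$ after using $\im(t_1+t_2+t_3)=2\phi(t)+\im t_{j_2}$. Second, bound the $\lambda$-powers: for the $\cosh$-factors in $\cK_3^\sharp$ invoke the inequality \eqref{cIneq} with $v=t_j-s_k$ (legitimate since $|\im(t_j-s_k)|=|\im t_j-\phi(t)|<\pi/2$ because $t\in\cA_3$, by the very inequality $\im t_j\pm\pi/2\gtrless\phi(t)$ used in the proof of Prop.~\ref{h3Prop}), replacing each $|2\cosh(t_j-s_k)|^{-\lambda}$ by $C(\lambda,\cdot)\,[2\cosh(\re\lambda(\re t_j-v_k))]^{-\re\lambda}$ times an absorbable constant; similarly replace $|2\sinh(s_1-s_2)|^{2\re\lambda}$ by $C(\lambda)[2\cosh(s_1-s_2)]^{2\re\lambda}$, again via \eqref{cIneq}-type reasoning, and then rewrite $\cosh(\re\lambda(\cdot))$ in terms of a single $\cosh$ raised to the first power to match the form of $B_2$. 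Third, after the rescaling $v_k\mapsto v_k/\re\lambda$ and $\re t_j\mapsto w_j$ (this produces an overall $(\re\lambda)^{-2}$ and turns the $(s_1-s_2)$-numerator into $(\re\lambda)^{-2}(v_1-v_2)$, all absorbed into $C(\lambda,\cdot)$), the $v$-integral is exactly of the shape $B_2(w)$ with $w_j=\re\lambda\,\re t_j$ — note the numerator $(v_1-v_2)\sinh(v_1-v_2)$ in \eqref{B2} is produced by combining the $(s_1-s_2)$ from $F_2$ with the $\sinh(s_1-s_2)$ surviving from $W_2$ after cancelling one power of $2\cosh$ against the denominator. Fourth, substitute the closed form \eqref{B2eval}, which yields precisely $4\prod_{1\le j<k\le 3}\re\lambda\,\re(t_j-t_k)/\sinh(\re\lambda\,\re(t_j-t_k))$ up to the factor $\re\lambda$ per pair (again absorbed), giving the product in \eqref{F3b}; track that all the accumulated constants depend continuously only on $\lambda$ and on the imaginary parts of the $t_j$, and that the latter dependence can be packaged into $d(t)=\im(t_{j_1}-t_{j_3})$ since $\cA_3$ is controlled by the spread of the imaginary parts — this is where the continuity claim "$C$ continuous on $\Lambda_0\times[0,\pi)$" comes from.

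\emph{Main obstacle.} The delicate point is not the contour shift (that is dictated by Prop.~\ref{h3Prop}) but the bookkeeping needed to reduce the genuine two-variable $s$-integral to the precise integrand of $B_2$: one must verify that after replacing every $|2\cosh|^{-\lambda}$ and $|2\sinh|^{2\lambda}$ by their real-$\re\lambda$ surrogates via \eqref{cIneq}, the leftover factors assemble exactly into $(v_1-v_2)\sinh(v_1-v_2)\big/\prod_{j=1}^3\prod_{k=1}^2\cosh(w_j-v_k)$ with no residual $v$-dependence outside $B_2$, and that the sign of $\re(t_1-t_2)$ (which can be negative, since $\cA_3$ imposes no ordering on $\re t_j$) is handled by the evenness of each factor $\xi/\sinh\xi$ — so that the final product in \eqref{F3b} is manifestly the correct even, positive expression. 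A secondary bit of care is needed to see that the constant emerging from \eqref{cIneq} applied to the $W_2$-factor, which a priori depends on $|\im(s_1-s_2)|$, is harmless because we have arranged $\im(s_1-s_2)=0$ throughout. Once these are checked, the estimate \eqref{F3b} follows, and uniformity in $t$ over the relevant compacta of $\cA_3$ is immediate from the explicit continuous dependence of all constants on $d(t)$.
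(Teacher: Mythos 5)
Your overall strategy (shift to the contour $\R+i\phi(t)$, insert the bound \eqref{F2b2} on $F_2$, use \eqref{cIneq} on the $\cosh$-factors of $\cK_3^\sharp$, and reduce the surviving $s$-integral to $B_2$ via \eqref{B2eval}) is exactly the paper's route, and your bookkeeping of the exponential prefactor via $\im(t_1+t_2+t_3)=2\phi(t)+\im t_{j_2}$ is correct. However, there is a genuine gap at the step where you pass from the actual integrand to the integrand of $B_2$. After taking moduli on the shifted contour, the factors depending on $s_1-s_2$ combine into the ratio $(s_1-s_2)[\sinh^2(s_1-s_2)]^{\re\lambda}/\sinh(\re\lambda(s_1-s_2))$ (note that $|[4\sinh^2(s_1-s_2)]^{\lambda}|=[4\sinh^2(s_1-s_2)]^{\re\lambda}$ exactly, since $s_1-s_2$ is real on the shifted contour; no \eqref{cIneq}-type replacement of $\sinh$ by $\cosh$ is needed or helpful there). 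To turn this ratio into the factor $(v_1-v_2)\sinh(v_1-v_2)$ appearing in $B_2$ one needs the inequality $(\sinh z)^a\le\sinh(az)$, $z\ge 0$, which holds only for $a\ge 1$; your phrases ``rewrite $\cosh(\re\lambda(\cdot))$ in terms of a single $\cosh$ raised to the first power'' and ``cancelling one power of $2\cosh$ against the denominator'' do not describe a valid mechanism and in effect presuppose this inequality. Consequently your argument proves \eqref{F3b} only for $\re\lambda\ge 1$, whereas the proposition asserts it for all $\lambda\in\Lambda_0$.

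The paper closes the range $\re\lambda\in(0,1)$ by a separate argument that is absent from your proposal: set $\rho=\sinh^{-1}(1)$ and split the $r$-integral in \eqref{F3fin} at $|r_1-r_2|=\rho$. On the region $|r_1-r_2|>\rho$ one has $(\sinh^2 r)^{\re\lambda}\le C(\re\lambda)\sinh^2(r\re\lambda)$, which brings one back to the $B_2$-type integral; on the complementary region the sinh-ratio is bounded by $1/\re\lambda$, and extending the integration to $\R^2$ one is left with the square of the one-dimensional integral $\cI_3(\re\lambda\,\re t)=\int_\R dr\,\prod_{j=1}^3\cosh(r-\re\lambda\,\re t_j)^{-1}$, which is evaluated by contour integration and then dominated by the desired product using $\prod_{j<k}\cosh^{-2}((q_j-q_k)/2)\le\prod_{j<k}(q_j-q_k)/\sinh(q_j-q_k)$. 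Without some such supplementary device your proof does not establish the stated result on all of $\Lambda_0\times\cA_3$.
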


\begin{proof}
From the representation \eqref{F3phi} of $F_3$ and the definitions \eqref{cWW} of $W_2$ and \eqref{cKs} of $\cK_3^\sharp$, we obtain the majorization
\begin{multline}
|F_3(\lambda;t,u)|\leq \exp(-u_3\im(t_1+t_2+t_3))\\ \times\frac12\int_{(\R+i\phi(t))^2}ds \frac{[4\sinh^2(s_1-s_2)]^{\re\lambda}}{\prod_{j=1}^3\prod_{k=1}^2|2\cosh(t_j-s_k)|^{\re \lambda}}|F_2(\lambda;s,(u_1-u_3,u_2-u_3))|.
\end{multline}
Using next the bounds~\eqref{cIneq} and~\eqref{F2b2}, this yields
\begin{multline}
|F_3(\lambda;t,u)|\leq C(\lambda,d(t))\exp\big(-u_3\im t_{j_2}-(u_1+u_2)\phi(t)\big) \\ \times\int_{(\R+i\phi(t))^2}ds \frac{\re(s_1-s_2)[\sinh^2(s_1-s_2)]^{\re\lambda} }{\sinh(\re\lambda\,\re(s_1-s_2))}\prod_{j=1}^3\prod_{k=1}^2\frac{C^\prime(\lambda,|\im t_j-\phi(t)|)}{\cosh(\re\lambda\,\re(t_j-s_k))},
\end{multline}
where $C^\prime$ is continuous on $ \Lambda_0\times[0,\pi/2)$. Now we have from \eqref{phi}, \eqref{d} and \eqref{cA3},
\be
|\im t_j-\phi(t)|\leq d(t)/2 < \pi/2,\quad j=1,2,3,\quad t\in \cA_3.
\ee
It follows that the $C^\prime$-product is bounded above by a function $C(\lambda,d(t))$ that is continuous on $ \Lambda_0\times[0,\pi)$.
Thus, changing variables, we finally arrive at
\begin{multline}\label{F3fin}
|F_3(\lambda;t,u)|\leq C(\lambda,d(t))\exp(-u_3\im t_{j_2}-(u_1+u_2)\phi(t)) \\ \times\int_{\R^2}dr \frac{[\sinh^2(r_1-r_2)]^{\re\lambda} }{\sinh(\re\lambda\,(r_1-r_2))}\frac{r_1-r_2}{\prod_{j=1}^3\prod_{k=1}^2\cosh(\re\lambda\,(\re t_j-r_k))}.
\end{multline}

We proceed to obtain a bound on the sinh-ratio factor in the integrand, considering first the case~$ \lambda\in\overline{\Lambda_{ 1}}$, cf.~\eqref{Lam}. To this end we make use of \eqref{cIneq} and the inequality
\be\label{sIneq}
(\sinh z)^a\leq \sinh a z,\quad z\geq 0,\quad a\geq 1,
\ee
which can be verified as follows. Introducing the function $\varphi(a,z)\equiv \sinh (a z)/(\sinh z)^a$ for $a\geq 1$ and $z>0$, we have
\be
\frac{d\varphi(a,z)}{da} = \varphi(a,z)\big(z\coth a z - \log(\sinh z)\big).
\ee
Since $\coth a z\geq 1$ and $\log(\sinh z)\leq z-\log 2$, the rhs is positive. Given that $\varphi(1,z)\equiv 1$, it follows that $\varphi(a,z)\geq 1$, which in turn implies \eqref{sIneq}.

Combining \eqref{sIneq} with the bound \eqref{F3fin}, we infer
\begin{multline}\label{F3f1}
|F_3(\lambda;t,u)|\leq C(\lambda,d(t))\exp(-u_3\im t_{j_2}-(u_1+u_2)\phi(t)) \\ \times\int_{\R^2}
dr \frac{(r_1-r_2)\sinh(\re\lambda(r_1-r_2))}{\prod_{j=1}^3\prod_{k=1}^2\cosh(\re\lambda(\re t_j-r_k))}.
\end{multline}
When we now take $r\to v/\re\lambda$ in the  integral 
and compare the result to \eqref{B2}, then the bound \eqref{F3b} with~$\re\lambda\ge 1$  follows from \eqref{B2eval}.

It remains to extend this estimate to the interval~$\re\lambda\in(0,1)$. To this end we set $\rho:=\sinh^{-1}(1)$ and employ the bound
\be\label{sb1}
\frac{r(\sinh^2r)^{\re\lambda}}{\sinh(r\re\lambda)}\le \frac{1}{\re\lambda},\ \ \ r\in [-\rho,\rho],\ \ \ \re\lambda>0, 
\ee
which is clear from the positive function $x/\sinh x $, $x\in\R$, being bounded above by~1.
We also use
\be\label{sb2}
(\sinh^2r)^{\re\lambda}\le C(\re\lambda)\sinh^2(r\re\lambda),\ \ \ |r|\ge \rho,
\ee
with $C$ continuous on $(0,\infty)$. We now split the integral over~$\R^2$ in~\eqref{F3fin} into an integral over the set~$|r_1-r_2|>\rho$ and the integral over the complement~$|r_1-r_2|\le\rho$. For the first integral we can invoke the bound~\eqref{sb2}. Doing so, we can extend the integration to~$\R^2$ with the same integrand. This yields the integral in~\eqref{F3f1} already handled, so the first integral is majorized by the rhs of~\eqref{F3b}, now with $C$ continuous on~$\Lambda_0\times [0,\pi)$.

Finally, we consider the second integral. Here we can use~\eqref{sb1}, and then extend the integration over~$\R^2$ to obtain the square of the integral~$\cI_3(\re\lambda\,\re t)$, with
\be\label{chint}
\cI_3(q)\equiv \int_{\R}\frac{dr}{\prod_{j=1}^3\cosh(r- q_j)},\ \ \ q\in\R^3.
\ee
This integral can be evaluated by a contour integration, the result being
\be
\cI_3(q)=\pi/2\prod_{1\le j<k\le 3}\cosh((q_j-q_k)/2).
\ee
Now it is easy to check the estimate
\be
\frac{1}{\prod_{1\le j<k\le 3}\cosh^2((q_j-q_k)/2)}\le \prod_{1\le j<k\le 3}\frac{q_j-q_k}{\sinh(q_j-q_k)}.
\ee
Therefore, the second integral is also majorized by the rhs of~\eqref{F3b}, completing the proof.
\end{proof}

We proceed to obtain a counterpart of Prop.~4.4. To this end we introduce
\be\label{not3}
T_3\equiv \frac13\sum_{j=1}^3t_j,\ \ U_3\equiv \frac13\sum_{j=1}^3u_j,\ \ \tilde{t}_j\equiv  t_j-T_3,\ \ \ 
\tilde{u}_j\equiv  u_j-U_3,\ \ \ j=1,2,3,
\ee
and recall the representation~\eqref{F3phi}. Substituting~\eqref{F2s}, it is not hard to verify that it implies
\be\label{F3s}
F_3(\lambda;t,u)=\exp(3iT_3U_3)F_3^r(\lambda;t,u),
\ee
\be\label{F3r}
F_3^r(\lambda;t,u)\equiv\frac{1}{2}\int_{\R^2}dsW_2(\lambda;s)\cK^{\sharp}_3(\lambda;\tilde{t},s)F_2(\lambda;s,(u_1-u_3,u_2-u_3)),\ \  (\lambda,t,u)\in\Lambda_0\times \cA_3^r\times \R^3,
\ee
where~$\cA_3^r$ is the subset of~$\cA_3$~\eqref{cA3} given by 
\be\label{cA3r}
A_3^r\equiv \{ t\in \C^3 \mid \mu(t)<\pi/2 \},
\ee
with $\mu(t)$ the maximum function
\be
\mu(t)\equiv \max_{j=1,2,3}|\im \tilde{t}_j|.
\ee

The representation~\eqref{F3s}--\eqref{F3r} is of interest in its own right, inasmuch as it explicitly shows that~$F_3(\lambda;t,u)$ is the product of a ``center-of-mass factor" and a function~$F_3^r(\lambda;t,u)$ whose dependence on~$t$ and~$u$ is encoded in the differences~$t_1-t_2,t_2-t_3$ and $u_1-u_2,u_2-u_3$. We use it as the starting point for the proof of the following analog of~Prop.~4.4.

Before stating the result, we would like to highlight a second key ingredient of its proof. This is the integral
\be\label{C2}
C_2(z,w)\equiv \int_{\R^2}dv\frac{\sinh(v_1-v_2)}{\prod_{j=1}^3\prod_{k=1}^2\cosh(w_j-v_k)}\exp\big(v_1(z_3-z_1)+v_2(z_3-z_2)\big),
\ee
where we need~$|\re(z_3-z_i)|<2$, $i=1,2$, and $|\im w_j|<\pi/2$, $j=1,2,3$, to get absolute convergence. When we assume in addition~$z_i\ne z_3$, $i=1,2$, and $w_j\ne w_k$, $1\le j<k\le 3$, it is given by
\be\label{C2ev}
C_2(z,w) = \prod_{j=1}^2\frac{\pi}{\sin (\pi (z_{3}-z_j)/2)}\prod_{1\leq j<k\leq 3}\frac{1}{\sinh(w_j-w_k)}\cdot\sum_{\tau\in S_{3}}(-)^\tau\exp\Big(\sum_{j=1}^2w_{\tau(j)}(z_{3}-z_j)\Big).
\ee
Just as the integral~$B_2(w)$ given by~\eqref{B2}--\eqref{B2eval} (to which~\eqref{C2}--\eqref{C2ev} reduce for $z=0$), it seems not easy to obtain this explicit evaluation in one fell swoop. It is a consequence of the recursive evaluation of more general integrals dealt with in Lemma~C.1 of~\cite{HR13}, and amounts to Lemma~C.3 for $N=2$. 

\begin{proposition}
The function~$F_3(\lambda;t,u)$ is holomorphic in
\be\label{cD3}
\cD_3\equiv \{ (\lambda,t,u)\in \Lambda_0\times \cA_3^r\times \C^3 \mid  |\im (u_j-u_k)|<2\re \lambda,\ \ j,k=1,2,3 \}.
\ee
 Moreover, for all~$(\lambda,t,u)\in\cD_3$ such that
  \be
\re\lambda\ge 1,\ \ \  \im(t_j-t_k)\ne 0,\ \ \   \im(u_j-u_k)\ne 0,\ \ \ 1\le j<k\le 3,
  \ee
   we have
\begin{multline}\label{F3bu}
|F_3(\lambda;t,u)|  <  C(\lambda,\mu(t))\exp(- 3\im (T_3U_3) )
 \\
 \times \frac{\sum_{\tau\in S_3}(-)^{\tau}\exp\Big(-\sum_{j=1}^3\re \big(\tilde{t}_{\tau(j)}\big)\im \tilde{u}_j\Big) }
{\prod_{1\le j<k\le 3}\sin (\pi\im(u_k-u_j)/2\re \lambda)\sinh(\re \lambda\,\re(t_j-t_k))},
\end{multline}
where   $C$ is continuous on $\overline{\Lambda_1}\times[0,\pi/2)$. 
\end{proposition}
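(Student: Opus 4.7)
The strategy closely parallels the passage from Prop.~4.3 to Prop.~4.4, but with the integral~$C_2(z,w)$ of~\eqref{C2}--\eqref{C2ev} now playing the role that the residue calculation did in the $N=2$ case. First I would exploit the center-of-mass factorisation~\eqref{F3s}--\eqref{F3r}, which reduces matters to controlling~$F_3^r(\lambda;t,u)$ on the subdomain $\cA_3^r$ (producing the prefactor $\exp(-3\im(T_3U_3))$ upon taking absolute values), and which recasts the~$t$-dependence in terms of the centred variables~$\tilde t_j$ that will appear in the final bound.

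For holomorphy on~$\cD_3$, I would combine the bound~\eqref{F2bu} from Prop.~4.4, applied to the factor $F_2(\lambda;s,(u_1-u_3,u_2-u_3))$ in the integrand of~\eqref{F3r}, with the decay furnished by $\cK_3^\sharp$. Specifically, \eqref{F2bu} grows at most like $\exp\bigl(|\im(u_2-u_1)|\,|\re(s_1-s_2)|/2\bigr)$ in the integration variables, whereas $\cK_3^\sharp$ together with the bound~\eqref{chest} contributes decay $\exp\bigl(-2\re\lambda(|\re s_1|+|\re s_2|)\bigr)$ up to polynomial and $W_2$-factors. The restriction $|\im(u_j-u_k)|<2\re\lambda$ in~\eqref{cD3} is exactly what is needed for the resulting integrand to be absolutely integrable uniformly on compacta of~$\cD_3$, which yields analyticity and permits differentiation through the integral sign.

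For the explicit estimate~\eqref{F3bu}, I would work on the shifted contour $(\R+i\phi(t))^2$, majorise the $|\cosh|^{-\lambda}$ factors in $\cK_3^\sharp$ via~\eqref{cIneq}, and insert the bound~\eqref{F2bu} for $F_2$. The exponential piece $\exp(-\im[(s_1+s_2)(u_1+u_2-2u_3)]/2)$ combines with the shift to produce the~$\tilde t_{j_2}$, $\phi(t)$ contribution of the prefactor, exactly as in the real-$u$ analysis of~Prop.~\ref{F3bProp}. The only difference with the real-$u$ case is that~\eqref{F2bu} brings in the numerator factor $\sinh(\im(u_2-u_1)\re(s_1-s_2)/2)$, and an extra denominator $\sin(\pi\im(u_2-u_1)/2\re\lambda)$, which is constant in~$s$. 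Then I would apply~\eqref{sIneq} (requiring $\re\lambda\ge 1$) to replace $[\sinh^2(s_1-s_2)]^{\re\lambda}$ by $\sinh^2(\re\lambda\,\re(s_1-s_2))$, and after division by the $\sinh(\re\lambda\re(s_1-s_2))$ coming from~\eqref{F2bu} the surviving weight on the contour is the single $\sinh$ factor appearing in the integrand of~$C_2$.

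After rescaling $s\to v/\re\lambda$ and a short expansion of the remaining $\sinh(\im(u_2-u_1)v/2\re\lambda)$ as a difference of two exponentials, the resulting integral becomes a linear combination of two integrals of the form~\eqref{C2}, with $w_j=\re\lambda\,\re t_j$ and $z$ built from $\im\tilde u_j/\re\lambda$ (plus the shift $u_3\to 0$ already accounted for in~\eqref{F3r}). The explicit evaluation~\eqref{C2ev} then produces the antisymmetric sum $\sum_{\tau\in S_3}(-)^\tau\exp(\cdots)$ appearing in~\eqref{F3bu} and the denominator $\prod \sin(\cdot)\sinh(\cdot)$, after rewriting $\sum_{j=1}^2 w_{\tau(j)}(z_3-z_j)=\sum_{j=1}^3 w_{\tau(j)}(z_3-z_j)$ modulo a permutation-independent shift (which can be absorbed into the continuous constant $C$). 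The main obstacle is this last matching: one has to keep careful track of how the two sinh-factors produced by the majorisation reassemble into the single sinh of~\eqref{C2}, and how the antisymmetric three-term sum of~\eqref{C2ev} reproduces the six-term sum over~$S_3$ in~\eqref{F3bu} once the centred-variable identities $\tilde u_1+\tilde u_2+\tilde u_3=0$ and $\tilde t_1+\tilde t_2+\tilde t_3=0$ are used. Once this bookkeeping is complete, the continuity of the constant $C$ on $\overline{\Lambda_1}\times[0,\pi/2)$ follows from the continuity of the constants supplied by~\eqref{cIneq},~\eqref{F2bu} and~\eqref{sIneq}, together with the fact that the maximum function $\mu(t)$ stays below $\pi/2$ on~$\cA_3^r$.
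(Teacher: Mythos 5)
Your treatment of the main bound \eqref{F3bu} for $\re\lambda\ge 1$ is essentially the paper's own proof: reduce to $F_3^r$ via \eqref{F3s}--\eqref{F3r}, majorise with \eqref{cIneq} and the $N=2$ bound \eqref{F2bu}, use \eqref{sIneq} to trade $[\sinh^2(s_1-s_2)]^{\re\lambda}$ against the $\sinh(\re\lambda\,\re(s_1-s_2))$ denominator of \eqref{F2bu}, rescale, recognise $C_2(z,w)$ and invoke \eqref{C2ev} together with the centred identity \eqref{tuid3}. Two corrections of detail: once you are in the centred representation \eqref{F3r}, the contour is simply $\R^2$ (the kernel carries $\tilde t$ with $|\im \tilde t_j|<\pi/2$ on $\cA_3^r$), so no shift to $(\R+i\phi(t))^2$ is needed or appropriate there; and the exponent matching is exact, not ``modulo a permutation-independent shift'': since $\sum_j\re\tilde t_j=0$ and $\sum_j\tilde u_j=0$, one has $\sum_{j=1}^2 w_{\tau(j)}(z_3-z_j)=-\sum_{j=1}^3 w_{\tau(j)}\tilde z_j$ identically, which is \eqref{tuid3}. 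A leftover $(t,u)$-dependent exponential could in any case not be absorbed into a constant depending only on $(\lambda,\mu(t))$.

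The genuine gap is in the holomorphy claim on $\cD_3$, which must cover $0<\re\lambda<1$, where \eqref{sIneq} is unavailable. Your rate bookkeeping does not close as written: $\cK_3^\sharp$ decays like $\exp(-3\re\lambda(|\re s_1|+|\re s_2|))$ (three cosh factors per integration variable), not $\exp(-2\re\lambda(\cdots))$, and, more importantly, the bound \eqref{F2bu} applied to $F_2(\lambda;s,(u_1-u_3,u_2-u_3))$ contains the centre-of-mass factor $\exp(-(s_1+s_2)\im(u_1+u_2-2u_3)/2)$, which grows in the $s_1+s_2$ direction at a rate that can approach $2\re\lambda$; your plan records growth only in $\re(s_1-s_2)$. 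A direct absolute-convergence argument can be repaired, but it requires a directional estimate using both constraints $|\im(u_1-u_3)|<2\re\lambda$ and $|\im(u_2-u_3)|<2\re\lambda$ (for instance via $|p-q|+|p+q|=2\max(|p|,|q|)$ with $p=\im(u_1-u_3)$, $q=\im(u_2-u_3)$) played off against the kernel's $3\re\lambda$ decay, the growth of $W_2$, and the $\sinh(\re\lambda\,\re(s_1-s_2))$ denominator of \eqref{F2bu}. The paper avoids this altogether: starting from the majorization \eqref{F3bu1}, it splits the integration into $|s_1-s_2|>\rho$ and its complement, applies \eqref{sb2} on the first region (reducing to the $C_2$-integral already evaluated) and bounds the $s_1-s_2$-dependent factors by a constant on the second, leaving a product of two manifestly convergent one-dimensional integrals; this yields holomorphy on all of $\cD_3$, including $\re\lambda\in(0,1)$.
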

\begin{proof}
We need only bound~$F_3^r$, cf.~\eqref{F3s}. Combining the bound~\eqref{cIneq} and Prop.~4.4, we first arrive at
\bea\label{F3bu1}
|F_3^r(\lambda;t,u)| & <  &\frac{C(\lambda,\mu(t))}{\sin(\pi \im (u_2-u_1)/2\re\lambda)}\int_{\R^2}ds\exp\big(- (s_1+s_2)\im (u_1+u_2-2u_3)/2\big)
\nonumber \\
  &  &  \times\frac{[\sinh^2(s_1-s_2)]^{\re\lambda}}{\sinh(\re\lambda\, (s_1-s_2))}\frac{\sinh(\im (u_2-u_1)(s_1-s_2)/2)}{\prod_{j=1}^3\prod_{k=1}^2\cosh(\re\lambda(\re \tilde{t}_j-s_k))}.
\eea
Choosing $\re\lambda\ge 1$ until further notice, we can invoke the bound~\eqref{sIneq} to deduce that we may replace the sinh-ratio in the integral by~$\sinh(\re\lambda\, (s_1-s_2))$. Then we switch to variables
\be\label{nvar}
v\equiv \re\lambda \cdot s,\ \ w\equiv   \re\lambda \cdot \re\tilde{t},\ \ \ z\equiv \im u/\re\lambda,
\ee
so that the resulting integral becomes equal to $(\re\lambda)^{-2}$ times
\be
 \int_{\R^2}dv\exp\big(-\frac12(v_1+v_2)(z_1+z_2-2z_3)\big)\frac{\sinh(v_1-v_2)\sinh\big(\frac12 (z_2-z_1)(v_1-v_2)\big)}
{\prod_{j=1}^3\prod_{k=1}^2\cosh(v_k-w_j)}.
\ee
In turn, this integral can be rewritten as
\be
\frac12  \int_{\R^2}dv\frac{\sinh(v_1-v_2)}{\prod_{j=1}^3\prod_{k=1}^2\cosh(v_k-w_j)}
\Big(\exp\big(v_1(z_3-z_1)+v_2(z_3-z_2)\big)-\big( v_1\leftrightarrow v_2\big)\Big).
\ee
Comparing to~\eqref{C2}, we deduce that the integral equals~$C_2(z,w)$. Next we use the evaluation~\eqref{C2ev} and reverse the substitution~\eqref{nvar}. Using the easily verified identity
\be\label{tuid3} 
\sum_{j=1}^2\re(\tilde{t}_{\tau(j)})\im(u_{3}-u_j)=-\sum_{j=1}^3 \re(\tilde{t}_{\tau(j)})\im \tilde{u}_j,
\ee
we now obtain the estimate~\eqref{F3bu}.

Turning to the general case $\re\lambda>0$, we split the integral in~\eqref{F3bu1} into integrals over the region $|s_1-s_2|>\rho$ and its complement. Using first~\eqref{sb2} in the first integral and then extending the integration over all of $\R^2$, we readily obtain once again a bound of the form occurring on the rhs of~\eqref{F3bu}. In the second integral we can first majorize the three factors depending on~$s_1-s_2$  by a function $C(\re\lambda)$ that is continuous on~$(0,\infty)$ and then extend the integration over $\R^2$. This results in the product of two one-dimensional integrals that are manifestly absolutely convergent on~$\cD_3$. Hence the holomorphy assertion follows, completing the proof.
\end{proof}

\section{The case $N>3$}
In this section we use Props.~\ref{h3Prop}--5.4  as the starting point for an induction argument.  More specifically, Thms.~\ref{hNThm}--6.4 follow from Props.~\ref{h3Prop}--5.4 for $N=3$, and our induction assumption is that the theorems are valid if we substitute $N-1$ for $N$.  

As will transpire, much of our discussion in Section \ref{Sec4} can be readily adapted to the general-$N$ case. Therefore we omit some details that will be clear from Section \ref{Sec4}.
On the other hand, we restrict attention to $\lambda\in \overline{\Lambda_1}$ (recall~\eqref{Lam}). Indeed, for the case $N=3$ it already became clear in the proof of Prop.~5.3 that the interval $\re\lambda \in(0,1)$ must be handled in a different way. This  supplementary method runs into novel difficulties for $N>3$, which we shall not address here. 

When constructing $F_N$ from $F_{N-1}$ we encounter the integrand (cf.~\eqref{FN})
\be
I_N(\lambda;t,u,s)\equiv W_{N-1}(\lambda;s)\cK_N^\sharp(\lambda;t,s)F_{N-1}(\lambda;s,(u_1-u_N,\ldots,u_{N-1}-u_N)),
\ee
where we choose $u=(u_1,\ldots,u_N)\in\R^N$. In order to bound the factor $F_{N-1}$, we assume that the imaginary parts of $s_1,\ldots,s_{N-1}$ are equal to some $c\in\R$, and replace $N$ by $N-1$ in Thm.~\ref{FNbThm} below. Under this assumption the difference function $d_{N-1}(s)$ (given by \eqref{dN} and \eqref{imordN}) vanishes and $\phi_{N-1}(s)$ (defined in \eqref{phiN} below) equals $c$. Hence we obtain the bound
\begin{multline}\label{FN-1b}
|F_{N-1}(\lambda;s,(u_1-u_N,\ldots,u_{N-1}-u_N))| < C(\lambda)\exp\Big(-c\sum_{j=1}^{N-1}(u_j-u_N)\Big)
\\
\prod_{1\leq m<n\leq N-1}\frac{s_m-s_n}{\sinh(\re\lambda(s_m-s_n))},\ \ \ c:=\im s_1=\cdots =\im s_{N-1},
\end{multline}
where $C$ is continuous on $ \overline{\Lambda_1}$. Combining this bound with \eqref{cWW} and \eqref{cKs} in the same way as in the previous section, we obtain the majorization
\begin{multline}\label{INb}
|I_N(\lambda;t,u,s)|<C(\lambda,\re t,|\im t_1-c|,\ldots,|\im t_N-c|)\\
\times\prod_{j=1}^{N-1}\big(1+|\re s_j|^{N-2}\big)\exp(-2\re \lambda |\re s_j|), \ \ \ u\in\R^N,
\end{multline}
with $C$ continuous on $\overline{\Lambda_1}\times\R^N\times [0,\pi/2)^N$.  

The singularities of the kernel function $\cK_N^\sharp(t,s)$ are located at
\be\label{INsing}
s_k = t_j\pm \frac{i\pi}{2}(2n+1),\quad k=1,\ldots,N-1,\quad j=1,\ldots,N,\quad n\in\N,
\ee
so when we choose at first $t\in\R^N$, then the function
\be\label{FNdef}
F_N(\lambda;t,u)\equiv \frac{\exp(iu_N(t_1+\cdots+t_N))}{(N-1)!}\int_{\R^{N-1}}ds I_N(\lambda;t,u,s),\quad \lambda\in\overline{\Lambda_1},\quad t,u\in\R^N,
\ee
is well defined. Furthermore,  $F_N$   extends to a holomorphic function of~$(\lambda,t)$ for~$\lambda\in\Lambda_1$ and  for $|\im t_j|<\pi/2$, $j=1,\ldots,N$. 

At this point we would like to mention that for the case $N=4$ we can still allow~$\lambda\in\Lambda_0$ in the above, since this is the restriction we have in the bound~\eqref{F3b} on~$F_3$. But we shall only obtain the $N>3$ counterpart of this bound for~$\lambda\in\overline{\Lambda_1}$, which is why we need to restrict~$\lambda$ for~$N>4$.

As before, we are allowed to shift all contours~$\R$ up and down by the same amount, provided the singularities \eqref{INsing} are not met, so we can extend the holomorphy domain step by step. To detail this, we let $t\in\C^N$ and introduce
 indices  
\be\label{imordN}
\im t_{j_N}\leq \im t_{j_{N-1}}\leq\cdots\leq \im t_{j_2}\leq \im t_{j_1},\quad \{j_1,\ldots,j_N\} = \{1,\ldots,N\},
\ee
and a function
\be\label{phiN}
\phi_N(t)\equiv \im(t_{j_1}+t_{j_N})/2.
\ee
We are now prepared for the first theorem of this section.
 
\begin{theorem}\label{hNThm}
Let $u\in\R^N$. Then the function $F_N(\lambda;t,u)$ is holomorphic for~$(\lambda,t)\in\Lambda_1\times \cA_N$, where
\be
\cA_N\equiv \{t\in  \C^N \mid \max_{1\leq j<k\leq N}|\im(t_j-t_k)|<\pi\}.
\ee
\end{theorem}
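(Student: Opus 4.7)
The approach I would take is to follow the template established in Props.~\ref{h2Prop} and~\ref{h3Prop} essentially verbatim, with the function $\phi_N(t)$ defined in~\eqref{phiN} playing the role the corresponding $\phi(t)$ did for $N=3$. The claim is local, so I would fix an arbitrary $(\lambda_0,t_0)\in\Lambda_1\times\cA_N$ and construct a holomorphic extension of $F_N$ on a neighborhood of this point.

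First I would record the key geometric inequality. Since $\phi_N(t_0)$ is by~\eqref{phiN} the midpoint of the interval $[\im(t_0)_{j_N},\im(t_0)_{j_1}]$, and $t_0\in\cA_N$ forces $\im(t_0)_{j_1}-\im(t_0)_{j_N}<\pi$, it follows that
\[
\im(t_0)_j+\pi/2 \;>\; \phi_N(t_0) \;>\; \im(t_0)_j-\pi/2,\qquad j=1,\ldots,N.
\]
Hence each horizontal line $\R+i\phi_N(t_0)$ in the $s_k$-plane passes strictly between the upward and downward singularity chains~\eqref{INsing} of $I_N(\lambda_0;t_0,u,\cdot)$. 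By continuity of the imaginary parts of the $t_j$, the same separation property persists on a small open polydisc neighborhood $U\subset\cA_N$ of $t_0$.

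Next, I would introduce the candidate extension
\[
\widetilde F_N(\lambda;t,u)\equiv\frac{\exp\big(iu_N(t_1+\cdots+t_N)\big)}{(N-1)!}\int_{(\R+i\phi_N(t_0))^{N-1}}ds\,I_N(\lambda;t,u,s).
\]
Applying the uniform bound~\eqref{INb} with $c=\phi_N(t_0)$, the integrand is dominated on $\Lambda_1\times U$ by a product of factors $(1+|\re s_j|^{N-2})\exp(-2\re\lambda|\re s_j|)$, uniformly on compact subsets. Standard holomorphy-under-the-integral-sign arguments (or Morera combined with Fubini) then yield that $\widetilde F_N$ is holomorphic on $\Lambda_1\times U$.

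Finally, I would verify that $\widetilde F_N$ agrees with the analytic continuation of $F_N$ defined by~\eqref{FNdef}. Exactly as in the proofs of Props.~\ref{h2Prop} and~\ref{h3Prop}, one connects real $t$ (for which the original contour $\R^{N-1}$ works with $\phi_N=0$) to any $t\in U$ by a path in $\cA_N$, and shifts all $N-1$ contours synchronously through intermediate heights. Each shift is justified by Cauchy's theorem together with the exponential decay furnished by~\eqref{INb}, and the geometric inequality above guarantees that no singularity is ever crossed. I expect the main obstacle to be the minor bookkeeping that $\phi_N(t)$ is only piecewise linear in~$t$, so one must keep the contour fixed at height $\phi_N(t_0)$ over all of $U$ rather than letting it track $\phi_N(t)$; the uniform form of~\eqref{INb} is tailored exactly to this situation, so the argument goes through cleanly. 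Since $(\lambda_0,t_0)\in\Lambda_1\times\cA_N$ was arbitrary, this completes the proof.
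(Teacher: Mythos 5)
Your proposal is correct and follows essentially the same route as the paper: the contour at height $\phi_N(t)$ (the midpoint of the extreme imaginary parts) separates the upward and downward singularity chains~\eqref{INsing} for every $t\in\cA_N$, simultaneous contour shifts justified by the decay in~\eqref{INb} give the continuation to the representation~\eqref{FNphi}, and the uniform bound then yields holomorphy in $(\lambda,t)$. Your extra care in freezing the contour height at $\phi_N(t_0)$ over a neighborhood of $t_0$ is just a more explicit rendering of the paper's appeal to the uniform bound, not a different argument.
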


\begin{proof}
Fixing $t\in\C^N$, we have
\be
\im t_j\pm \pi/2\gtrless \phi_N(t),\quad j=1,\ldots,N.
\ee
Consequently, the $s_k$-contour $\R+i\phi_N(t)$ remains below/above the upward/downward sequences of singularities \eqref{INsing}. Hence, by simultaneous contour shifts, we can continue $F_N$, as given by~\eqref{FNdef}, to all~$t\in\cA_N$. In this way we arrive at the representation
\begin{multline}\label{FNphi}
F_N(\lambda;t,u) = \frac{\exp(iu_N(t_1+\cdots t_N))}{(N-1)!}\int_{(\R+i\phi_N(t))^{N-1}}ds I_N(\lambda;t,u,s),\\ \lambda\in\overline{\Lambda_1},\quad t\in \cA_N,\quad u\in\R^N.
\end{multline}
 Combined with the uniform bound~\eqref{INb}, this yields the asserted holomorphy properties.
\end{proof}

We continue to show that $F_N$ is a joint eigenfunction of the $N$ PDOs in question with the expected eigenvalues.  

\begin{theorem}\label{eNThm}
Let $u\in\R^N$. For all $(\lambda,t)\in \overline{\Lambda_1}\times \cA_N$, we have the joint eigenfunction property
\be\label{eNEq}
:\hat{\Sigma}_k^{(N)}(\cL+\cE)(t):F_N(t,u) = S_k^{(N)}(u)F_N(t,u),\quad k=1,\ldots,N.
\ee
\end{theorem}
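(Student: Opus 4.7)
My plan is to extend the argument of Prop.~5.2 by induction on $N$, taking Thm.~6.2 with $N$ replaced by $N-1$ as the induction hypothesis. By Thm.~\ref{hNThm}, both sides of \eqref{eNEq} are holomorphic in $(\lambda,t) \in \Lambda_1 \times \cA_N$, so it suffices to verify the identity on $(N,\infty) \times \R^N$; the full claim then follows by analytic continuation in $\lambda$ and~$t$. Restricting temporarily to $\lambda > N$ guarantees that $W_{N-1}(s)^{1/2}$ has enough classical regularity (more than $N-1$ derivatives) to justify the integrations by parts described below.

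With this reduction in hand, I first commute the plane-wave factor $\exp\big(iu_N(t_1+\cdots+t_N)\big)$ through the PDO, which produces the shifted matrix $\cL + \cE + u_N \mathbf{1}_N$, and then invoke the expansion~\eqref{Sexp}. In view of the recurrence~\eqref{Srec}, the theorem reduces to verifying, for $k=1,\ldots,N$,
\begin{multline*}
\int_{\R^{N-1}} ds\, W_{N-1}(s)\, F_{N-1}\bigl(s,(u_1-u_N,\ldots,u_{N-1}-u_N)\bigr) :\hat{\Sigma}_k^{(N)}(\cL+\cE)(t): \cK_N^\sharp(t,s) \\
= S_k^{(N-1)}(u_1-u_N,\ldots,u_{N-1}-u_N)\int_{\R^{N-1}} ds\, I_N(t,u,s),
\end{multline*}
with the convention $S_N^{(N-1)} \equiv 0$.

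For $k = N$ the eigenfunction identity~\eqref{key1} annihilates the integrand, consistent with $S_N^{(N-1)} = 0$. For $1 \leq k \leq N-1$ the kernel identity~\eqref{key2} transfers the action to the $s$-variable, replacing the PDO on $\cK_N^\sharp(t,s)$ by $:\hat{\Sigma}_k^{(N-1)}(\cL+\cE)(-s):$. Using the similarity \eqref{sim2}, the integral takes the form
\[
\int_{\R^{N-1}} ds\, W_{N-1}(s)^{1/2} F_{N-1}(s,\ldots)\, \cH_k^{(N-1)}(-s)\, W_{N-1}(s)^{1/2}\, \cK_N^\sharp(t,s),
\]
and the formal self-adjointness of $\cH_k^{(N-1)}$ on $L^2(\R^{N-1},dt)$ (manifest from~\eqref{cHk}--\eqref{uhr}) allows me to transfer the derivatives onto the left factor, the flip $-s\to s$ being absorbed in the bilinear integration by parts since no complex conjugation is involved. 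The induction hypothesis applied to the resulting integrand
\[
\int_{\R^{N-1}} ds\, W_{N-1}(s) \cK_N^\sharp(t,s) :\hat{\Sigma}_k^{(N-1)}(\cL+\cE)(s): F_{N-1}(s,\ldots)
\]
then produces the factor $S_k^{(N-1)}(u_1-u_N,\ldots,u_{N-1}-u_N)$, closing the induction.

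The principal obstacle is controlling the integration-by-parts step: it requires both (i) sufficient classical smoothness of $W_{N-1}(s)^{1/2}$, handled by the temporary restriction $\lambda > N$ together with the subsequent analytic continuation, and (ii) vanishing of boundary terms at $|\re s_j|\to\infty$ after each differentiation. For (ii) I would combine the induction-hypothesis bound~\eqref{FN-1b} on $F_{N-1}$ with the exponential estimate~\eqref{chest} for $\cK_N^\sharp$, and use Cauchy's formula on the representation~\eqref{FNphi} (with $N$ replaced by $N-1$) to generate the same exponential decay for all $s$-derivatives of $F_{N-1}$ that appear when $\cH_k^{(N-1)}$ is unfolded via~\eqref{cHk}. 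These decay rates dominate the polynomial growth of $W_{N-1}(s)^{1/2}$ and its derivatives, so that all boundary contributions vanish and the above manipulations are rigorously legitimate.
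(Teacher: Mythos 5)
Your proposal is correct and takes essentially the same route as the paper: induction on $N$ with reduction to real $t$ and large real $\lambda$ by analyticity, commuting the plane wave and using \eqref{Sexp} with the recurrence \eqref{Srec}, applying \eqref{key1}--\eqref{key2}, passing to the similarity-transformed $\cH_k^{(N-1)}$ via \eqref{sim2}, integrating by parts (with the $-s\to s$ flip), and invoking the induction hypothesis. The only point to supplement is that analytic continuation gives the identity on $\Lambda_1\times\cA_N$, while the boundary case $\re\lambda=1$ of $\overline{\Lambda_1}$ requires in addition continuity of the integral in $\lambda$, as the paper notes in passing.
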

\begin{proof}
Assuming this is the case for $N\to N-1$, it remains to establish
\begin{multline}\label{intEq3}
:\hat{\Sigma}_k^{(N)}(\cL+\cE)(t):\int_{\R^{N-1}}ds I_N(t,u,s)\\ = S_k^{(N-1)}(u_1-u_N,\ldots,u_{N-1}-u_N)\int_{\R^{N-1}}ds I_N(t,u,s),\quad k=1,\ldots,N,
\end{multline}
with $S_N^{(N-1)}\equiv 0$. Also, by the above we may restrict attention to real~$t$ and~$\lambda>2$ (say). (Indeed, by analyticity the result first follows for $(\lambda,t)\in\Lambda_1\times \cA_N$, and then continuity of the integral for~$\lambda\in\overline{\Lambda_1}$ yields the assertion.) Now by holomorphy in~$t$ we can take the differentiations under the integral sign and act with the PDOs on the kernel function. Making use of the eigenfunction identity \eqref{key1} and kernel identities \eqref{key2}, we find that the lhs of \eqref{intEq3} equals zero for $k=N$ and is given by
\be\label{SInt}
\int_{\R^{N-1}}ds W_{N-1}(s)F_{N-1}(s,(u_1-u_N,\ldots,u_{N-1}-u_N)):\hat{\Sigma}_k^{(N-1)}(\cL+\cE)(-s):\cK_N^\sharp(t,s),
\ee
for $k=1,\ldots,N-1$.

Next, we mimic the argument expressed in~\eqref{2int}--\eqref{upsh} (now involving the similarity-transformed Hamiltonians~$\cH^{(N-1)}_k$, cf.~\eqref{cHk}), which shows that~\eqref{SInt} equals
\be
\int_{\R^{N-1}}ds W_{N-1}(s)\cK_N^\sharp(t,s):\hat{\Sigma}_k^{(N-1)}(\cL+\cE)(s):F_{N-1}(s,(u_1-u_N,\ldots,u_{N-1}-u_N)).
\ee
Then we can make use of the eigenvalue equations \eqref{eNEq} for $N\to N-1$ to arrive at the rhs of \eqref{intEq3}. 
\end{proof}

We continue by deducing the announced uniform bound on $F_N$. We shall follow closely the proof of Prop.~\ref{F3bProp}, the key ingredient being the integral
\be\label{BN-1}
B_{N-1}(w)\equiv \int_{\R^{N-1}}dv\frac{\prod_{1\leq m<n\leq N-1}(v_m-v_n)\sinh(v_m-v_n)}{\prod_{j=1}^{N}\prod_{k=1}^{N-1}\cosh(w_j-v_k)}.
\ee
From Lemma C.2 in \cite{HR13} we recall its explicit evaluation
\be\label{BN-1eval}
B_{N-1}(w) = 2^{N-1}\prod_{1\leq m<n\leq N}\frac{w_m-w_n}{\sinh(w_m-w_n)}.
\ee
We shall also make use of the function $\phi_N(t)$ (cf.~\eqref{phiN}) and the distance function
\be\label{dN}
d_N(t)\equiv \im(t_{j_1}-t_{j_N}),\quad t\in\C^N,
\ee
where the indices $j_1$ and $j_N$ are given by \eqref{imordN}.

\begin{theorem}\label{FNbThm}
Let $u\in\R^N$. For any $(\lambda,t)\in \overline{\Lambda_1}\times \cA_N$, we have
\be\label{FNb}
\begin{split}
|F_N(\lambda;t,u)| &< C(\lambda,d_N(t))\exp\left(-\left[\phi_N(t)\sum_{j=1}^Nu_j+u_N\sum_{k=2}^{N-1}(\im t_{j_k}-\phi_N(t))\right]\right)\\ &\quad \times\prod_{1\leq m<n\leq N}\frac{\re(t_m-t_n)}{\sinh(\re \lambda\, \re(t_m-t_n))},
\end{split}
\ee
where $C$ is continuous on $\overline{\Lambda_1}\times[0,\pi)$.
\end{theorem}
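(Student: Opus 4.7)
My plan is to follow the pattern of the proof of Prop.~5.3, now driven by the induction hypothesis \eqref{FN-1b}. The starting point is the shifted-contour representation \eqref{FNphi}, in which all $N-1$ integration contours lie along $\R + i\phi_N(t)$. Writing $s_k = r_k + i\phi_N(t)$ with $r_k \in \R$, the differences $s_m - s_n = r_m - r_n$ are real. This is precisely the configuration to which the inductive bound \eqref{FN-1b} applies, with $c = \phi_N(t)$, so that $d_{N-1}(s) = 0$ and $\phi_{N-1}(s) = \phi_N(t)$.

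Next I would majorize the three $s$-dependent factors of $I_N$. For $\cK_N^\sharp(\lambda;t,s)$, the cosh inequality \eqref{cIneq} applied to each of the $N(N-1)$ factors yields a bound proportional to $\prod_{j,k} 1/2\cosh(\re\lambda\,\re(t_j - r_k))$, multiplied by a constant depending on $|\im t_j - \phi_N(t)|$ for $j=1,\ldots,N$. By \eqref{phiN}, \eqref{dN} and the hypothesis $t \in \cA_N$, these imaginary-part differences are all at most $d_N(t)/2 < \pi/2$, so this constant is absorbed into a factor $C(\lambda, d_N(t))$ continuous on $\overline{\Lambda_1}\times [0,\pi)$. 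The factor $W_{N-1}(\lambda;s)$ contributes $\prod_{m<n}[4\sinh^2(r_m-r_n)]^{\re\lambda}$, while \eqref{FN-1b} supplies the exponential $\exp(-\phi_N(t)\sum_{j=1}^{N-1}(u_j-u_N))$ together with the sinh-ratio $\prod_{m<n}(r_m-r_n)/\sinh(\re\lambda(r_m-r_n))$.

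The crux of the proof is the reduction to the integral $B_{N-1}$ of \eqref{BN-1}. Combining the $W_{N-1}$ and $|F_{N-1}|$ contributions yields, modulo an absolute constant $4^{(N-1)(N-2)\re\lambda/2}$, an integrand whose pair factors have the form $(r_m-r_n)[\sinh|r_m-r_n|]^{2\re\lambda}/\sinh(\re\lambda|r_m-r_n|)$. Applying \eqref{sIneq} with $a=\re\lambda\ge 1$ to $(\sinh|r_m-r_n|)^{\re\lambda}$ and then squaring bounds $[\sinh|r_m-r_n|]^{2\re\lambda}$ above by $\sinh^2(\re\lambda|r_m-r_n|)$, so each such pair factor collapses to $\sinh(\re\lambda(r_m-r_n))$. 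Rescaling via $v_k\equiv \re\lambda\,r_k$ and $w_j\equiv \re\lambda\,\re t_j$ then identifies what remains with $(\re\lambda)^{-N(N-1)/2}B_{N-1}(w)$, and the explicit evaluation \eqref{BN-1eval} produces exactly the product $\prod_{m<n}\re(t_m-t_n)/\sinh(\re\lambda\,\re(t_m-t_n))$ appearing in \eqref{FNb}, the intervening powers of $\re\lambda$ cancelling.

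It remains only to verify the exponential prefactor. Combining $|\exp(iu_N\sum_j t_j)|=\exp(-u_N\sum_j \im t_j)$ from \eqref{FNphi} with the exponential from \eqref{FN-1b} yields the exponent
\[
-u_N\sum_{j=1}^N \im t_j - \phi_N(t)\sum_{j=1}^{N-1}(u_j-u_N).
\]
Using the identity $\sum_{j=1}^N \im t_j = \sum_{k=2}^{N-1}\im t_{j_k} + 2\phi_N(t)$, which is immediate from \eqref{phiN} and \eqref{imordN}, a short algebraic rearrangement recasts this as the exponent displayed in \eqref{FNb}. Validity on the boundary $\re\lambda=1$ follows by continuity of the integral in $\lambda$, exactly as in Thm.~6.2. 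The main obstacle will be bookkeeping rather than any isolated hard step: one has to carefully align the $c$-parameter of the inductive bound with the contour shift of \eqref{FNphi} so that $d_{N-1}(s)$ vanishes and $\phi_{N-1}(s)$ reduces to $\phi_N(t)$, and then verify that the powers of $\re\lambda$ generated by the rescaling cancel precisely against those produced by the explicit value of $B_{N-1}(w)$.
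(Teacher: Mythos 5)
Your proposal is correct and follows essentially the same route as the paper: the shifted-contour representation \eqref{FNphi}, the induction bound with $c=\phi_N(t)$ (so $d_{N-1}(s)=0$, $\phi_{N-1}(s)=\phi_N(t)$), the inequalities \eqref{cIneq} and \eqref{sIneq}, and the rescaled evaluation via $B_{N-1}$ in \eqref{BN-1}--\eqref{BN-1eval}, with the same exponent bookkeeping. Only a small wording slip: each pair factor collapses to $(r_m-r_n)\sinh(\re\lambda\,(r_m-r_n))$ rather than just $\sinh(\re\lambda\,(r_m-r_n))$, as your own power count $(\re\lambda)^{-N(N-1)/2}$ already presupposes; also, no separate continuity argument is needed at $\re\lambda=1$, since \eqref{sIneq} holds there.
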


\begin{proof}
From the representation \eqref{FNphi} and Eqs.~\eqref{cWW} and \eqref{cKs} we obtain the majorization
\begin{multline}
|F_N(\lambda;t,u)|\leq \frac{\exp\big(-u_N \im (t_1+\cdots +t_N)\big)}{(N-1)!}\\ \times\int_{(\R+i\phi_N(t))^{N-1}}ds \frac{\prod_{1\leq m<n\leq N-1}[4\sinh^2(s_m-s_n)]^{\re\lambda}}{\prod_{j=1}^N\prod_{k=1}^{N-1}|2\cosh(t_j-s_k)|^{\re\lambda}}|F_{N-1}(\lambda,s,(u_1-u_N,\ldots,u_{N-1}-u_N))|.
\end{multline}
By the induction assumption, \eqref{FNb} holds true when $N$ is replaced by $N-1$. Applying the resulting bound to the factor $F_{N-1}$ of the integrand, we note that we have $d_{N-1}(s)=0$ and $\phi_{N-1}(s)=\phi_N(t)$. Appealing also to the inequalities \eqref{cIneq} and \eqref{sIneq}, we deduce
\begin{multline}
|F_N(\lambda;t,u)| < C(\lambda,d_N(t))\exp\left(-\left[\phi_N(t)\sum_{j=1}^Nu_j+u_N\sum_{k=2}^{N-1}(\im t_{j_k}-\phi_N(t))\right]\right)\\ \times\int_{\R^{N-1}}ds \prod_{1\leq m<n\leq N-1}(s_m-s_n)\sinh(\re\lambda(s_m-s_n))\prod_{j=1}^N\prod_{k=1}^{N-1}\frac{C^\prime(\lambda,|\im t_j-\phi_N(t)|)}{\cosh(\re \lambda(\re t_j-s_k))},
\end{multline}
with $C^\prime$ continuous on $\overline{\Lambda_1}\times[0,\pi/2)$. Just as in the proof of Prop.~\ref{F3bProp}, we see that the $C^\prime$-product is bounded above by a function $C(\lambda,d_N(t))$ that is continuous on $\overline{\Lambda_1}\times[0,\pi)$. Taking $s\to s/\re \lambda$, we can use \eqref{BN-1}--\eqref{BN-1eval} to compute the remaining integral, yielding the bound \eqref{FNb}.
\end{proof}

The final theorem of this section concerns complex~$u$. As in the previous section, we first present some auxiliary results. To start with, we need a slight change in the notation~\eqref{not3} to prevent ambiguities. Specifically, 
for a given vector $v\in\C^M$, $M\geq 3$, we introduce $V_M\in\C$ and $v^{(M)}\in\C^M$ by
\be\label{notM}
V_M\equiv \frac{1}{M}\sum_{j=1}^Mv_j,\quad v_j^{(M)}\equiv v_j-V_M,\quad j=1,\ldots,M.
\ee
(Thus the $\tilde{t}$ and $\tilde{u}$ of the previous section are now denoted by~$t^{(3)}$ and~$u^{(3)}$.)

To continue, we show inductively that the representation~\eqref{F3s}--\eqref{F3r} generalizes as follows:
\be\label{FNs}
F_N(\lambda;t,u) =  \exp(NiT_NU_N) F_N^r(\lambda;t,u),
\ee
\begin{multline}\label{FNr}
F_N^r(\lambda;t,u)\equiv \frac{1}{(N-1)!}\int_{\R^{N-1}}ds W_{N-1}(\lambda;s)\cK_N^\sharp(\lambda;t^{(N)},s)\\
\times F_{N-1}(\lambda;s,(u_1-u_N,\ldots,u_{N-1}-u_N)),
\end{multline}
where $(\lambda,t,u)\in\overline{\Lambda_1}\times \cA_N^r\times \R^N$, with 
\be\label{cANr}
A_N^r\equiv \{ t\in \C^N \mid \mu_N(t)<\pi/2 \},\ \ 
\mu_N(t)\equiv \max_{j=1,\ldots,N}|\im t^{(N)}_j|.
\ee
To this end we begin by noting that the validity of~\eqref{FNs}--\eqref{FNr} for $N=3$ is implied by   \eqref{F3s}--\eqref{F3r}. Consider  now the integral on the rhs of \eqref{FNphi}, with
\be\label{trest}
\im t_1=\cdots =\im t_N=:\kappa  \Rightarrow d_N(t)=0,\  \phi_N(t)=\im T_N=\kappa.
\ee
Taking $s_j\to s_j+T_N$, it becomes (cf.~\eqref{cWW} and~\eqref{cKs})
\be
\int_{\R^{N-1}}ds W_{N-1}(s)\cK_N^\sharp(t^{(N)},s)F_{N-1}((s_1+T_N,\ldots,s_{N-1}+T_N),(u_1-u_N,\ldots,u_{N-1}-u_N)).
\ee
Making use of the equations~\eqref{FNs}--\eqref{FNr} with $N\to N-1$, which hold true by the induction assumption, it readily follows that
\begin{multline}
F_{N-1}((s_1+T_N,\ldots,s_{N-1}+T_N),(u_1-u_N,\ldots,u_{N-1}-u_N)) =\\ \exp\Big(iT_N\sum_{j=1}^{N-1}(u_j-u_N)\Big)F_{N-1}(s,(u_1-u_N,\ldots,u_{N-1}-u_N)).
\end{multline}
(To check this, observe that $F_{N-1}^r(s,(u_1-u_N,\ldots,u_{N-1}-u_N))$ is invariant under taking $s_j\to s_j+T_N$, $j=1,\ldots, N-1$.) Noting the identity
\be
u_N\sum_{j=1}^Nt_j + T_N\sum_{j=1}^{N-1}(u_j-u_N) = NT_NU_N,
\ee
we now obtain~\eqref{FNs}--\eqref{FNr}, first for $t$ satisfying~\eqref{trest}, and then for~$t\in\cA_N^r$ by analytic continuation.

As the generalization of~\eqref{C2}--\eqref{C2ev} we need the integral
\be\label{CN-1}
C_{N-1}(z,w)\equiv \int_{\R^{N-1}}dv\frac{\prod_{1\le j<k\le N-1}\sinh(v_j-v_k)}{\prod_{j=1}^N\prod_{k=1}^{N-1}\cosh(w_j-v_k)}\exp\big(v_1(z_N-z_1)+\cdots +v_{N-1}(z_N-z_{N-1})\big),
\ee
whose absolute convergence can be ensured by requiring~$|\re(z_N-z_i)|<2$, $i=1,\ldots,N-1$, and $|\im w_j|<\pi/2$, $j=1,\ldots,N$. Assuming also~$z_i\ne z_N$, $i=1,\ldots,N-1$, and $w_j\ne w_k$, $1\le j<k\le N$, its explicit evaluation reads 
\bea\label{CN-1ev}
C_{N-1}(z,w) & = & \prod_{j=1}^{N-1}\frac{\pi}{\sin (\pi (z_{N}-z_j)/2)}\prod_{1\leq j<k\leq N}\frac{1}{\sinh(w_j-w_k)}
\nonumber  \\
&  &  \times\sum_{\tau\in S_{N}}(-)^\tau\exp\Big(\sum_{j=1}^{N-1}w_{\tau(j)}(z_{N}-z_j)\Big),
\eea
as follows from Lemma~C.3 in~\cite{HR13}.

\begin{theorem}
The function~$F_N(\lambda;t,u)$ is holomorphic in
\be\label{cDN}
\cD_N\equiv \{ (\lambda,t,u)\in \Lambda_1\times \cA_N^r\times \C^N \mid  |\im (u_j-u_k)|<2\re \lambda,\ \ j,k=1,\ldots,N \},
\ee
and extends continuously to~$\lambda\in\overline{\Lambda_1}$. Moreover, for all~$(\lambda,t,u)\in\overline{\Lambda_1}\times\cA_N^r\times \C^N $ such that
\be
\im(t_j-t_k)\ne 0,\ \ \   |\im(u_j-u_k)|\in(0,2\re\lambda),\ \ \ 1\le j<k\le N,
\ee
we have
\begin{multline}\label{FNbu}
|F_N(\lambda;t,u)|  <  C(\lambda,\mu_N(t))\exp(- N\im (T_NU_N) )
 \\
 \times \frac{\sum_{\tau\in S_N}(-)^{\tau}\exp\Big(-\sum_{j=1}^N\re \big(t^{(N)}_{\tau(j)}\big)\im \big(u^{(N)}_j\big)\Big) }
{\prod_{1\le j<k\le N}\sin (\pi\im(u_k-u_j)/2\re \lambda)\sinh(\re \lambda\,\re(t_j-t_k))},
\end{multline}
where   $C$ is continuous on $\overline{\Lambda_1}\times[0,\pi/2)$. 
\end{theorem}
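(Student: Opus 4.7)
The plan is to generalize the proof of Prop.~5.4 inductively, with the key new input being the $N-1$ version of the bound~\eqref{FNbu}. I begin by invoking the representation~\eqref{FNs}--\eqref{FNr}, which reduces the task to estimating $F_N^r(\lambda;t,u)$. After shifting the $s$-contours to $\R+i\phi_N(t)$ as in the proof of Thm.~\ref{hNThm}, every $s_j$ has the same imaginary part and so $\mu_{N-1}(s)=0$; the induction hypothesis then applies to $F_{N-1}(\lambda;s,(u_1-u_N,\ldots,u_{N-1}-u_N))$ and yields a sum over $\tau\in S_{N-1}$ of exponentials in $\re s$, divided by a product of sin-factors (which account for the pairs $(j,k)$ with $1\le j<k\le N-1$ in the denominator of~\eqref{FNbu}) together with $\sinh(\re\lambda\,\re(s_m-s_n))$ factors.

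Next, the kernel factor $\cK_N^\sharp(\lambda;t^{(N)},s)$ is controlled via~\eqref{cIneq}, producing the overall $C(\lambda,\mu_N(t))$ constant together with a denominator $\prod_{j,k}\cosh(\re\lambda\,\re(t^{(N)}_j-s_k))$. The weight $W_{N-1}(\lambda;s)$, combined with the sinh-denominator from the inductive bound, is dominated via~\eqref{sIneq} --- this is precisely the place where $\re\lambda\ge 1$ is used: each ratio $[\sinh^2(s_m-s_n)]^{\re\lambda}/\sinh(\re\lambda\,\re(s_m-s_n))$ is majorized by $\sinh(\re\lambda(s_m-s_n))$, leaving an antisymmetric product $\prod_{1\le m<n\le N-1}\sinh(\re\lambda(s_m-s_n))$ in the numerator.

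I then perform the change of variables $v=\re\lambda\cdot s$, $w=\re\lambda\cdot\re t^{(N)}$, $z=\im u/\re\lambda$, in parallel with~\eqref{nvar}. Each $\tau$-term in the $S_{N-1}$-sum contributes an integral whose integrand is $\prod_{m<n}\sinh(v_m-v_n)\exp\bigl(\sum_{j=1}^{N-1}v_j(z_N-z_{\tau(j)})\bigr)$ divided by $\prod_{j,k}\cosh(v_k-w_j)$. After an $S_{N-1}$-relabelling of the integration variables, the antisymmetry of $\prod_{m<n}\sinh(v_m-v_n)$ shows that all $(N-1)!$ terms collapse to $(-)^\tau$ times the same integral, namely $C_{N-1}(z,w)$ as defined in~\eqref{CN-1}; these identical contributions cancel the prefactor $1/(N-1)!$ from~\eqref{FNr}. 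The explicit evaluation~\eqref{CN-1ev} then injects the sum over $\tau\in S_N$ and the sin-factors for the remaining pairs $(j,N)$, completing the denominator of~\eqref{FNbu}. The remaining exponential matches via the identity $\sum_{j=1}^{N-1}\re(t^{(N)}_{\tau(j)})\im(u_N-u_j)=-\sum_{j=1}^N\re(t^{(N)}_{\tau(j)})\im(u^{(N)}_j)$, an immediate consequence of $\sum_j\re t^{(N)}_j=0$ and the direct generalization of~\eqref{tuid3}; combined with the center-of-mass factor from~\eqref{FNs}, this yields~\eqref{FNbu}.

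Holomorphy on $\cD_N$ and continuous extension to $\overline{\Lambda_1}$ will follow from the uniform absolute convergence of~\eqref{FNr} (and of its $t$- and $\lambda$-derivatives) implied by the majorisations above, exactly as in the proof of Thm.~\ref{hNThm}. The main obstacle is the collapse described in the previous paragraph: the antisymmetric $S_{N-1}$-sum from the induction hypothesis has to combine cleanly with the antisymmetric sinh-product $\prod_{m<n}\sinh(v_m-v_n)$ coming from $W_{N-1}$, so as to reproduce the single-exponential integrand of $C_{N-1}$; it is the antisymmetry of this sinh-product under relabelling of the integration variables that makes this work. A secondary bookkeeping task is matching the various exponential prefactors via the centroid identity generalising~\eqref{tuid3}.
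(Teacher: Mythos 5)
Your argument reproduces the paper's proof essentially verbatim: reduction to $F_N^r$ via \eqref{FNs}--\eqref{FNr}, the inductive bound applied with all $s_j$ on a common horizontal line so that $\mu_{N-1}(s)=0$, the estimates \eqref{cIneq} and \eqref{sIneq} (the latter being exactly where $\re\lambda\ge 1$ enters), the substitution \eqref{nvar}, the collapse of the $S_{N-1}$-sum by antisymmetry of the $\sinh$-product to $(N-1)!\,C_{N-1}(z,w)$, and the evaluation \eqref{CN-1ev} combined with \eqref{tuidN}, with holomorphy and the continuous extension to $\overline{\Lambda_1}$ read off from the absolute-convergence majorizations. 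The only (minor) deviation is your contour shift to $\R+i\phi_N(t)$: in \eqref{FNr} the kernel already involves the centred variable $t^{(N)}$ with $\mu_N(t)<\pi/2$, so the paper simply keeps the real contour, and a shift by the uncentred $\phi_N(t)$ would not in general be admissible for that integrand, although your conclusion $\mu_{N-1}(s)=0$ holds on the real contour anyway.
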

\begin{proof}
From~\eqref{FNs} we see that it suffices to estimate~$F_N^r$. By the induction assumption, \eqref{FNbu} is valid for $N\to N-1$. Combining the resulting estimate on~$|F_{N-1}(\lambda;t,u)|$ with the bound~\eqref{cIneq}, and using the generalization of~\eqref{tuid3}, namely,
\be\label{tuidN} 
\sum_{j=1}^{N-1}\re\big(t^{(N)}_{\tau(j)}\big)\im(u_{N}-u_j)=-\sum_{j=1}^N \re\big(t^{(N)}_{\tau(j)}\big)\im\big( u^{(N)}_j\big),
\ee
we obtain  
\begin{multline}\label{FNrb}
|F_N^r(\lambda;t,u)| < \frac{C(\lambda,\mu_N(t))}{\prod_{1\leq j<k\leq N-1}\sin(\pi\im(u_k-u_j)/2\re \lambda)}\\ \times\int_{\R^{N-1}}ds\prod_{1\leq j<k\leq N-1}\frac{[\sinh^2(s_j-s_k)]^{2\re\lambda}}{\sinh(\re\lambda (s_j-s_k))}\frac{1}{\prod_{j=1}^N\prod_{k=1}^{N-1}\cosh(\re\lambda(\re  t^{(N)}_j-s_k))} \\
\times \exp\left( S_{N-1}\sum_{j=1}^{N-1}\im(u_N-u_j)\right) \sum_{\tau\in S_{N-1}}(-)^{\tau}\exp\left(\sum_{j=1}^{N-2}s^{(N-1)}_{\tau(j)}\im(u_{N-1}-u_j)\right),
\end{multline}
where we have used the convention~\eqref{notM}.
 
Since $\re\lambda\ge 1$,  we can invoke the bound~\eqref{sIneq} to deduce that we may replace the sinh-ratio in the integral by~$\sinh(\re\lambda\, (s_j-s_k))$. Then we switch to the variables~\eqref{nvar} (with~$\tilde{t}$ replaced by~$t^{(N)}$, of course).
The result is that the integral becomes equal to $(\re\lambda)^{-N+1}$ times
\begin{multline}\label{FN-1} 
 \int_{\R^{N-1}}dv\frac{\prod_{1\leq j<k\leq N-1}\sinh(v_j-v_k)}{\prod_{j=1}^N\prod_{k=1}^{N-1}\cosh(v_k-w_j)}\exp\left( V_{N-1}\sum_{j=1}^{N-1}(z_N-z_j)\right)\\ \times\sum_{\tau\in S_{N-1}}(-)^{\tau}\exp\left(\sum_{j=1}^{N-2}v^{(N-1)}_{\tau(j)}(z_{N-1}-z_j)\right).
\end{multline} 
Now we fix $\tau\in S_{N-1}$ and consider the corresponding summand on the rhs of \eqref{FN-1}. Changing variables $v_j\to v_{\tau^{-1}(j)}$ and using antisymmetry of the $\sinh$-product, we obtain the summand for which $\tau$ is the identity permutation. Hence we may replace the sum by $(N-1)!$ times the $\tau={\rm id}$ summand. We simplify the resulting product of exponential functions by using the readily checked identity
\be
V_{N-1}\sum_{j=1}^{N-1}(z_N-z_j) + \sum_{j=1}^{N-2}v^{(N-1)}_j(z_{N-1}-z_j) = \sum_{j=1}^{N-1}v_j(z_N-z_j).
\ee
Comparing to~\eqref{CN-1}, we deduce that the integral equals~$(N-1)!\,C_{N-1}(z,w)$. Next we use the evaluation~\eqref{CN-1ev} and reverse the substitution~\eqref{nvar}. Using  once more~\eqref{tuidN}, we now obtain the estimate~\eqref{FNbu}. From this estimate the asserted holomorphy and continuity properties readily follow. 
\end{proof}

\section{The Heckman--Opdam hypergeometric function}
In this final section we establish the precise connection between the function~$F_N$ and the Heckman--Opdam hypergeometric function associated with the root system~$A_{N-1}$. As a first and crucial step, we decompose $F_N(t,u)$ into a product of a center-of-mass factor and a function $F_N^r(t,u)$ depending only on the differences $t_j-t_{j+1}$ and $u_j-u_{j+1}$ with $j=1,\ldots,N-1$, cf.~\eqref{FNs}--\eqref{FNr}. The point is that when we express the eigenfunction property \eqref{eNEq} in terms of the function $F_N^r$, then we are able to compare the result with the system of hypergeometric differential equations introduced in \cite{HO87}. The details now follow.

We begin by obtaining the pertinent eigenfunction property for $F_N^r$. First, we act with $:\hat{\Sigma}_k^{(N)}(\cL+\cE)(t):$ on \eqref{FNs} and shift the PDO past the exponential factor using
\be
:\hat{\Sigma}_k^{(N)}(\cL+\cE)(t):\exp\left(NiT_NU_N\right) = \exp\left(NiT_NU_N\right):\hat{\Sigma}_k^{(N)}(\cL+\cE+U_N\mathbf{1}_N)(t):.
\ee
Second, we expand the rhs as in \eqref{Sexp} with $u_n\to U_N$ and insert the resulting expression in the lhs of \eqref{eNEq}. Third, we expand $S_k^{(N)}(u)$ in the rhs of \eqref{eNEq} according to
\be
\begin{split}
S_k^{(N)}(u) &= S_k^{(N)}(u_1^{(N)}+U_N,\ldots,u_N^{(N)}+U_N)\\ &= \sum_{l=0}^kU_N^l\binom{N-k+l}{l}S_{k-l}^{(N)}\big(u^{(N)}\big).
\end{split}
\ee
Fourth, comparing powers of $U_N$, we obtain
\be\label{FNrEq}
:\hat{\Sigma}_k^{(N)}(\cL+\cE)(t):F_N^r(t,u) = S_k^{(N)}\big(u^{(N)}\big)F_N^r(t,u),\quad k=1,\ldots,N.
\ee

The commutative PDO algebra generated by $:\hat{\Sigma}_k^{(N)}(\cL+\cE)(t):$,  $k=1,\ldots,N$, contains in particular the second order PDO
\be\label{LN}
\begin{split}
L_2(t) &\equiv \frac{1}{4}\sum_{j=1}^N\partial_{t_j}^2 + \frac{\lambda}{2}\sum_{1\leq j<k\leq N}\coth(t_j-t_k)(\partial_{t_j}-\partial_{t_k})\\ &= \frac{1}{2}:\hat{\Sigma}_2^{(N)}(\cL+\cE)(t): - \frac{1}{4}\left(:\hat{\Sigma}_1^{(N)}(\cL+\cE)(t):\right)^2 - \left(\rho,\rho\right),
\end{split}
\ee
where 
\be
\rho \equiv \frac{\lambda}{2}(N-1,N-3,\ldots,-N+3,-N+1),
\ee
and $(\cdot,\cdot)$ denotes the standard bilinear form on $\C^N$. Also, combining~\eqref{FNrEq} and \eqref{LN} we obtain the eigenvalue equation
\be\label{LNEq}
L_2(t)F_N^r(t,u) = \left(iu^{(N)}/2-\rho,iu^{(N)}/2+\rho\right)F_N^r(t,u).
\ee

The above equations \eqref{LN} and \eqref{LNEq} should now be compared with Eqs.~(2.6) and (3.12) in~\cite{HO87} for the root system choice~$A_{N-1}$. The latter equations involve a coupling parameter~$k$ and eigenvalue vector~$\tilde{\lambda}$, as well as a quantity~$h$ ranging over the~$SL(N,\C)$-torus~$H$.  (We have added the tilde to the notation $\lambda$ employed in~\cite{HO87}, so as to prevent confusion with our coupling parameter~$\lambda$.) 
To make an explicit comparison possible, we fix the simple roots in $A_{N-1}$ to be $e_j-e_{j+1}$, $j=1,\ldots,N-1$ (with~$e_1,\ldots,e_N$ the standard basis in~$\C^N$), and view~$h$ as a diagonal~$N\times N$ matrix with $\det(h)=1$. Then
  we  arrive at the identifications
\be\label{ident}
k=\lambda,\quad (\tilde{\lambda}_1,\ldots,\tilde{\lambda}_N) = (iu_1/2,\ldots,iu_N/2),\quad h(t) = \diag\left(e^{2t_1},\ldots,e^{2t_N}\right),
\ee
with $U_N=T_N=0$.  

Imposing these identifications, the  eigenvalue equations~\eqref{FNrEq} (with~$(t,u)\in\cA_N\times\R^N$   satisfying $T_N=U_N=0$) give rise to the system of hypergeometric differential equations in Def.~2.13 of~\cite{HO87}. Indeed, the PDO $:\hat{\Sigma}_1^{(N)}(\cL+\cE)(t):$ acts as the zero operator on~$F_N^r(t,u)$, whereas the remaining~$N-1$ PDOs $:\hat{\Sigma}_k^{(N)}(\cL+\cE)(t):$  yield a generating set for the algebra~$\D$ in~\cite{HO87} corresponding to~$A_{N-1}$.

Heckman and Opdam constructed a basis of~$N!$ solutions ``at infinity'' of the system of hypergeometric differential equations and singled out a special linear combination~$F(\tilde{\lambda},k;h(t))$. They showed that it extends to a holomorphic function in the~$N-1$ difference variables $t_j-t_{j+1}$ on a neighbourhood of the origin and that (generically) the function $F(\tilde{\lambda},k;h(t))$ is up to a constant characterized by this property. Moreover, they conjectured that with a specific normalization their function satisfies $F(\tilde{\lambda},k;h(0))=1$, cf.~Conjecture~6.11 in~\cite{HO87}. This conjecture was later proved by Opdam \cite{Opd93}. From this state of affairs we readily deduce  the following relationship.

\begin{proposition}
Let $\lambda\in\overline{\Lambda_1}$ and let $u\in\R^N$ satisfy $U_N=0$. For all $t\in \cA_N$ with $T_N=0$, we have
\be
F(iu/2,\lambda;h(t)) = \frac{F_N^r(\lambda;t,u)}{F_N^r(\lambda;0,u)}.
\ee
\end{proposition}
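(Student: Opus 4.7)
The plan is to identify $F_N^r(\lambda;t,u)$, restricted to the slice $T_N=U_N=0$, with a scalar multiple of the Heckman--Opdam hypergeometric function by invoking their uniqueness theorem for holomorphic solutions of the hypergeometric system at the torus identity $e\in H$, and then to fix the scalar via Opdam's normalization $F(\tilde\lambda,k;e)=1$.

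First, I would verify the hypotheses of the uniqueness theorem. By Theorem~6.1, $F_N(\lambda;t,u)$ is holomorphic in $t$ on $\cA_N$, which contains a neighborhood of the origin in $\C^N$. Since the decomposition \eqref{FNs} gives $F_N=F_N^r$ on the slice $T_N=0$, the function $F_N^r(\lambda;t,u)$ is correspondingly holomorphic at $t=0$, and, upon passing to the torus coordinates $h(t)=\diag(e^{2t_j})$, at the identity $e\in H$. By the discussion preceding the proposition, the eigenvalue equations \eqref{FNrEq} coupled with the identifications \eqref{ident} show that $F_N^r(\lambda;t,u)$ satisfies the Heckman--Opdam system of hypergeometric differential equations for $A_{N-1}$ with spectral parameter $\tilde\lambda=iu/2$ and coupling $k=\lambda$.

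Second, the Heckman--Opdam uniqueness theorem asserts that, for generic spectral parameter, the space of local solutions of their hypergeometric system that extend holomorphically across $e$ is one-dimensional. Hence for such $u$ there exists a scalar $c(\lambda;u)$, independent of $t$, with
\[
F_N^r(\lambda;t,u) \;=\; c(\lambda;u)\,F(iu/2,\lambda;h(t)).
\]
Specialising to $t=0$ and invoking Opdam's normalization $F(iu/2,\lambda;e)=1$ from \cite{Opd93}, one reads off $c(\lambda;u)=F_N^r(\lambda;0,u)$, which yields the displayed identity on an open dense subset of admissible $u$-values.

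Third, I would extend to all $u\in\R^N$ with $U_N=0$ by continuity: the left-hand side is holomorphic in $u$ on a suitable complex extension by Theorem~6.4, while the right-hand side is likewise holomorphic in $\tilde\lambda$ by the Heckman--Opdam construction, so the identity propagates from the dense set, provided $F_N^r(\lambda;0,u)\neq 0$ so that the displayed quotient is well-defined. I expect this nonvanishing to be the main technical obstacle; it should follow from the fact that $F_N^r(\lambda;0,u)$ depends holomorphically on $u$ and equals the (nonzero) proportionality constant $c(\lambda;u)$ on the generic set, confining its zeros to a proper analytic subvariety, which can then be excluded by an explicit check at a single convenient point (e.g.\ $u=0$) or by appeal to the known nonvanishing of the relevant Heckman--Opdam $c$-function.
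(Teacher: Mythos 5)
Your argument is essentially the paper's own: the authors likewise combine the eigenvalue equations \eqref{FNrEq} with the identifications \eqref{ident}, the holomorphy of $F_N^r$ near $t=0$, Heckman--Opdam's (generic) characterization of $F(\tilde{\lambda},k;h(t))$ as the solution holomorphic at the identity, and Opdam's normalization $F(\tilde{\lambda},k;h(0))=1$ to deduce the proportionality and fix the constant by setting $t=0$. Your extra care about extending from generic $u$ and the nonvanishing of $F_N^r(\lambda;0,u)$ goes slightly beyond what the paper makes explicit, but the route is the same.
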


\bibliographystyle{amsalpha}

\end{document}